\newcommand{\be}{\begin{equation}}
\newcommand{\ee}{\end{equation}}
\newcommand{\bea}{\begin{eqnarray}}
\newcommand{\eea}{\end{eqnarray}}
\newcommand{\dst}{\displaystyle}
\newcommand{\fr}[2]{\frac{{\dst #1}}{{\dst #2}}}
\renewcommand{\Re}{\mathrm{Re }}
\renewcommand{\Im}{\mathrm{Im }}
\newcommand{\lr}[1]{ \langle #1 \rangle}
\newcommand{\Tr}{\mathrm{Tr}}
\newcommand{\Z}{\mathbb{Z}}
\newcommand{\RR}{\mathbb{R}}
\providecommand{\mtrx}[1]{\begin{pmatrix} #1 \end{pmatrix}}
\newcommand{\mmatrix}[4]{ \left(\! \begin{array}{ccc}#1 & #2 \\ #3 & #4 \end{array}\!\right) }
\newcommand{\mmmatrix}[9]{ \left(\! \begin{array}{ccc}#1 & #2 & #3\\ #4 & #5 & #6\\ #7 & #8 & #9\\ \end{array}\!\right) }
\providecommand{\bs}[1]{\boldsymbol{#1}}
\providecommand{\id}{{\boldsymbol{1}}}
\newcommand{\toCP}{\xrightarrow{CP}}
\definecolor{darkgreen}{rgb}{0.0, 0.6, 0.2}
\definecolor{DARKGREEN}{rgb}{0.0, 0.6, 0.2}
\newtheorem{theorem}{Theorem}
\def\lsim{\mathrel{\rlap{\lower4pt\hbox{\hskip1pt$\sim$}}
    \raise1pt\hbox{`$<$}}}         
\def\gsim{\mathrel{\rlap{\lower4pt\hbox{\hskip1pt$\sim$}}
    \raise1pt\hbox{$>$}}}         
\newcommand{\lrpartial}{\partial^{\hspace{-6pt}\raise3pt\hbox{\small $\leftrightarrow$}}}
\providecommand{\eq}[1]{\begin{equation} #1
    \end{equation}}
\providecommand{\la}[1]{\lambda_{#1}}
\DeclareMathOperator{\diag}{\mathrm{diag}}
\DeclareMathOperator{\re}{\mathrm{Re}}
\DeclareMathOperator{\im}{\mathrm{Im}}
\providecommand{\ta}{\tilde{a}}
\providecommand{\tb}{\tilde{b}}
\begin{document}
\title{
{\normalsize \hfill CFTP/18-015} \\*[7mm]
Basis-invariant conditions for $\boldsymbol{CP}$ symmetry of order 4}

\author{Igor~P.~Ivanov}\thanks{E-mail: igor.ivanov@tecnico.ulisboa.pt}
\affiliation{CFTP, Instituto Superior T\'{e}cnico, Universidade de Lisboa,
Avenida Rovisco Pais 1, 1049 Lisboa, Portugal}
\author{Celso~C.~Nishi}\thanks{E-mail: celso.nishi@ufabc.edu.br}
\affiliation{Centro de Matem\'atica, Computa\c c\~ao e Cogni\c c\~ao,
Universidade Federal do ABC - UFABC,
09.210-170, Santo Andr\'e, SP, Brazil
}
\author{Jo\~{a}o~P.\ Silva}\thanks{E-mail: jpsilva@cftp.ist.utl.pt}
\affiliation{CFTP, Departamento de F\'{\i}sica,
Instituto Superior T\'{e}cnico, Universidade de Lisboa,
Avenida Rovisco Pais 1, 1049 Lisboa, Portugal}
\author{Andreas~Trautner}\thanks{E-mail: trautner@mpi-hd.mpg.de}
\affiliation{Bethe Center for Theoretical Physics und Physikalisches Institut der Universit\"at Bonn, Nussallee 12, 53115 Bonn, Germany}
\affiliation{Max-Planck-Institut f\"ur Kernphysik, Saupfercheckweg 1, 69117 Heidelberg, Germany}

\begin{abstract}
Three-Higgs-doublet models (3HDM) allow for a novel, physically distinct form 
of $CP$ invariance: $CP$ symmetry of order 4 (CP4).
Due to the large basis change freedom in 3HDM, it is imperative to recognize the presence 
of a possibly hidden CP4 in a basis-invariant way.
In the present work, we solve this problem and establish basis-invariant necessary and sufficient conditions
for a 3HDM to possess a CP4 symmetry. 
We also derive a basis-invariant criterion to decide whether or not a CP4 symmetric 3HDM 
possesses any additional $CP$ symmetry, 
as well as a criterion to decide whether or not CP4 is spontaneously broken.
\end{abstract}

\maketitle

\section{Introduction}

There has been great interest in the $CP$ properties of models
with extended scalar sectors, because they are related to
two fundamental issues in particle physics.
Firstly,
concerning the observed excess of matter over antimatter; 
one of the most important open problems in Physics.
Explaining it through baryogenesis requires the fulfillment of the
three Sakharov conditions \cite{Sakharov:1967dj}:
violation of baryon number;
violation of the $C$ and $CP$ symmetries;
and interactions out of thermal equilibrium.
In principle, these conditions could have been met at the electroweak
phase transition of the Standard Model (SM).
Unfortunately,
in the SM,
the phase transition is ineffective and the CP-violation
parameter is far too small -- see, for example,
\cite{Trodden:1998ym, Cline:2006ts}.
In contrast, even simple extensions of the
scalar sector allow for baryogenesis at the electroweak phase transition
\cite{Cohen:1991iu}.
Secondly,
experiments at CERN's LHC have identified a fundamental scalar:
the Higgs boson \cite{Aad:2012tfa, Chatrchyan:2012xdj}.
Given the generation nature of the fermion sector,
the next important question is how many fundamental scalars there are.
These two open questions place the study of the
$CP$ properties of $N$ Higgs doublet models
(NHDM) at the center of current research, for a recent review see \cite{Ivanov:2017dad}.

The study of $CP$ violation in the scalar sector has a long history,
starting with T.D.~Lee's suggestion that $CP$ might be
conserved at the Lagrangian level, but broken spontaneously
by the vacuum~\cite{Lee:1973iz}.
This was achieved in a simple two Higgs doublet model (2HDM).
Since the late 2000's, there exists a classification of all 2HDM models
which can arise from a symmetry, their vacua,
and their $CP$ properties
\cite{Ivanov:2005hg, Maniatis:2006fs, Ivanov:2006yq,
Maniatis:2007vn, Ivanov:2007de, Ferreira:2009wh,
Ferreira:2010hy, Ferreira:2010yh}.
In the 2HDM with doublets $\phi_a$, $a =1,2$,
any model with a symmetry among the scalars is automatically invariant
under a $CP$ symmetry which, in a suitable basis, takes the canonical form:\footnote{%
Here and in the following we suppress the transformation of the space-time argument,
which transforms in the usual way $(x_0,\vec{x})\toCP(x_0,-\vec{x})$ for $CP$ transformations of any order.}
\be
\phi_a \toCP \phi_a^*\,,
\label{CP2}
\ee
Clearly, this $CP$ transformation has the property that, when applied twice, it produces the identity transformation,
$(CP)^2=1$. This means that this is a transformation of order 2, and it remains so in any other basis, 
even though the transformation law may differ from \eqref{CP2}. 
We designate $CP$ transformations of order 2 by CP2.
Moreover, a model obeys CP2 if and only if there is a basis where all parameters of the potential
are real (the so-called \textit{real basis}, which is exactly the basis where \eqref{CP2} holds) \cite{Gunion:2005ja,Ivanov:2015mwl}.

The simplest symmetries one can apply to the generic NHDM
are family (also called Higgs-flavor) symmetries
\be
\phi_a \rightarrow S_{ab} \phi_b\, ,
\label{FamSym}
\ee
and generalized $CP$ symmetries (GCP),
\be
\phi_a \toCP X_{ab} \phi_b^*\, ,
\label{GCP}
\ee
where  $a,b=1,\dots,N$,
and $S$ and $X$ are generic unitary matrices.
Generalized $CP$ symmetries appeared in \cite{Lee:1966ik}.
Their explicit application for the quark sector appeared in
\cite{Bernabeu:1986fc},
and GCP in the
scalar sector was initially explored by the Vienna group
in \cite{Ecker:1981wv, Ecker:1983hz, Neufeld:1987wa}.

The form of $S$ in Eq.~\eqref{FamSym}
is basis-dependent and, through a suitable basis change,
it can be simplified into a diagonal matrix with phases along the diagonal;
$\phi_a \rightarrow \mathrm{e}^{i \alpha_a} \phi_a$.
Similarly,
the form of $X$ is basis-dependent and can be simplified by a basis change
to a block-diagonal form \cite{Ecker:1987qp,Weinberg:1995mt},
which has on the diagonal either pure phase factors or $2\times 2$ matrices of the following type:\footnote{%
The crux is that,
under a unitary basis change of
$\phi_a \rightarrow \phi_a^\prime = U_{ab} \phi_b$,
one has
$X\rightarrow X'=UXU^{\mathrm{T}}$ (\textit{not} $UXU^\dagger$).
Therefore, only symmetric matrices $X^{\mathrm{T}}=X$ may be
completely diagonalized.}
\be
\mmatrix{c_\alpha}{s_\alpha}{-s_\alpha}{c_\alpha}\quad \mbox{as in Ref.~\cite{Ecker:1987qp},}\quad \mbox{or}\quad
\mmatrix{0}{\mathrm{e}^{i\alpha}}{\mathrm{e}^{-i\alpha}}{0}\quad \mbox{as in Ref.~\cite{Weinberg:1995mt}.}\label{block}
\ee
Not all matrices $X$ can be related by a basis-change, though.
Applying a generalized $CP$-symmetry twice, one ends up with a family
transformation with the matrix $XX^*$,
which is not necessarily equal to the identity matrix.
In fact, it may happen that one would need to
apply the $CP$ transformation $k>2$ times to arrive at the identity transformation.
The minimal number $k$ required for that
is called the order of the transformation.
We will denote a $CP$ symmetry of order $k$ as CP$k$.

Of course, one can combine several family symmetries into non-abelian groups,
combine several $CP$ symmetries, or combine family and $CP$ symmetries.
Then, typically only one of the generators can be written in the simplified form,
and the others generally cannot.
Models with 2 Higgs doublets have two interesting characteristics
\cite{Ivanov:2005hg, Ivanov:2006yq, Ferreira:2009wh}:
i) there are only 6 classes of models one can obtain with symmetries;
ii) all classes obey the usual CP2 symmetry.
In particular, applying CP$k$ to the 2HDM one obtains a model
which could have been obtained by applying CP2 and some other
symmetries.
Said otherwise,
in the symmetry-constrained 2HDM classes, there is always a real basis.

The interesting result of \cite{Ivanov:2015mwl} is that this is no
longer the case in the 3HDM.
Indeed, one can build a CP4 3HDM potential which has physical CP symmetry, yet complex phases that
cannot be removed by any basis change.
Thus, this model satisfies CP4, but not CP2.
The classification of abelian symmetry groups in the
3HDM was obtained in \cite{Ferreira:2008zy, Ivanov:2011ae},
while the classification of all 3HDM symmetry-constrained models
was performed in \cite{Ivanov:2012ry, Ivanov:2012fp}.
Extensions to the 3HDM with Yukawa interactions appear
in \cite{Serodio:2013gka, Ivanov:2013bka}.

Identification of the symmetry-constrained potentials,
their vacua,
symmetry breaking, and $CP$ properties is difficult
when working directly with the scalar fields $\phi_a$.
Such studies are much simpler using the bilinear
formalism, which we introduce in section~\ref{section-bilinear}.
This formalism appeared first in the context of the 2HDM in
\cite{Nagel:2004sw}
and
\cite{Ivanov:2005hg,Maniatis:2006fs, Ivanov:2006yq, Maniatis:2007vn, Ivanov:2007de},
but it was immediately extended to the NHDM,
for example, in \cite{Nagel:2004sw}, in Appendix B of \cite{Maniatis:2006fs}, 
and in \cite{Nishi:2006tg, Nishi:2007nh, Ivanov:2010ww, Ivanov:2010wz,
Ivanov:2010zx, Nishi:2011gc, Maniatis:2014oza, Ivanov:2014doa}.
In section~\ref{section-3HDMCP} we present the impact of CP$k$
symmetries on the 3HDM.
In section~\ref{section-alignment-examples} we introduce the notion of complete alignment in the adjoint space,
and show that only few symmetry-constrained 3HDMs exhibit this property.
We also discuss there what else, in addition to the complete alignment, is needed 
to tell CP4 from other symmetry-based situations. 
Building on these observations, we present in section~\ref{section-to-CP4} our main result:
the basis-invariant necessary and sufficient conditions for the presence of $CP$ symmetry of order $4$,
in the context of the 3HDM.
In the following section we discuss additional basis-invariant properties that a CP4-symmetric 3HDM must satisfy in order to guarantee the absence of accidental symmetries, as well as a basis-invariant condition for
detecting the spontaneous breaking of CP4 symmetry. Then we draw our conclusions.
The appendices contain a description of the symmetry-constrained 3HDM based on CP4, CP4 accompanied with CP2, $S_3$, $D_4$ and $O(2)$.

\section{Bilinear formalism}\label{section-bilinear}

\subsection{Orbit space}

Let us start with a brief recapitulation of the bilinear formalism,
with specific application to 3HDM \cite{Ivanov:2010ww,Maniatis:2014oza}.
We work with $N=3$ Higgs doublets $\phi_a$, $a=1,2,3$, all having the same electroweak quantum numbers.
The most general renormalizable 3HDM potential can be compactly written as
\be
V=Y_{ab} (\phi^{\dagger}_a\phi_b)+Z_{abcd} (\phi^{\dagger}_a\phi_b)(\phi^{\dagger}_c\phi_d)\,.\label{potential}
\ee
We construct the following $1+8$ gauge-invariant bilinear combinations $(r_0, r_i)$:
\be
r_0 = {1\over\sqrt{3}}\phi^{\dagger}_a\phi_a\,,\quad r_i = \phi^{\dagger}_a (t^i)_{ab}\phi_b\,,\quad
i=1,\dots,8\,.\label{bilinears}
\ee
Here, $t_i = \lambda_i/2$ are generators of the $SU(3)$ algebra satisfying
\be\label{eq:SU3}
[t_i,t_j] = i f_{ijk} t_k\,,\quad \text{and} \quad \{t_i,t_j\} = {1 \over 3}\delta_{ij}\id_{3} + d_{ijk} t_k\,,
\ee
with the $SU(3)$ structure constants $f_{ijk}$ and the fully symmetric $SU(3)$ invariant tensor $d_{ijk}$.
With the usual choice of basis for the Gell-Mann matrices $\lambda_i$, these have the non-zero components
\be
f_{123} = 1\,, \quad
f_{147} = -f_{156} = f_{246} = f_{257} = f_{345} = -f_{367} = {1\over 2}\,,\quad
f_{458} = f_{678} = {\sqrt{3} \over 2}\,,
\label{tensor-fijk}
\ee
as well as
\bea
&&d_{146} = d_{157} = - d_{247} = d_{256} = {1\over 2}\,,\qquad
\phantom{-} d_{344} = d_{355} = - d_{366} = - d_{377} = {1\over 2}\,,\nonumber\\
&& d_{118} = d_{228} = d_{338} = - d_{888} = {1\over \sqrt{3}}\,,\qquad
d_{448} = d_{558} = d_{668} = d_{778} = -{1\over 2\sqrt{3}}\,.\label{tensor-dijk}
\eea
Group-theoretically, $r_0$ is an $SU(3)$ singlet and $r_i$ realizes the adjoint representation of $SU(3)$.
The coefficient in the definition of $r_0$ is not fixed by this construction. We use here the definition
borrowed from \cite{Ivanov:2010ww} but alternative normalization factors are possible \cite{Maniatis:2014oza}.
The exact choice is inessential here.
In the Gell-Mann basis, the bilinears $r_i$ have the following form:
\bea
&&r_1 + i r_2 = \phi_1^\dagger \phi_2\,,\quad
r_4 + i r_5 = \phi_1^\dagger \phi_3\,,\quad
r_6 + i r_7 = \phi_2^\dagger \phi_3\,,\nonumber\\
&&
r_3 = \fr{1}{2}(\phi_1^\dagger\phi_1 - \phi_2^\dagger\phi_2)\,,\quad
r_8 = \fr{1}{2\sqrt{3}}(\phi_1^\dagger\phi_1 + \phi_2^\dagger\phi_2 - 2 \phi_3^\dagger\phi_3)\,.\label{ri-explicit}
\eea
The real vectors $r$ obtained in this way do not fill the entire real eight-dimensional space $\RR^8$ 
(the \textit{adjoint space}, whose vectors will be denoted as $x$),
but a 7D manifold in it, which is called the orbit space. The points of this space are in one-to-one correspondence
with gauge orbits within the Higgs fields space $\phi_a$.
Algebraically, the orbit space is defined by the following (in)equalities \cite{Ivanov:2010ww}:
\be
r_0 \ge 0\,, \quad r_0^2 - r_i^2 \ge 0\,,\quad
d_{ijk}r_ir_jr_k + {1\over 2\sqrt{3}}r_0(r_0^2 - 3 r_i^2) = 0\,.
\ee
A basis change in the space of Higgs doublets $\phi_a \to U_{ab} \phi_b$ with $U \in SU(3)$
leaves $r_0$ unchanged and induces an $SO(8)$ rotation of the vector $r_i$.
Not all $SO(8)$ rotations can be obtained in this way;
they must conserve, in addition, $d_{ijk}r_i r_j r_k$.

\subsection{Constructions in the adjoint space}\label{section-adjoint}

The map from the gauge invariants $\phi_a^\dag\phi_b$
to $r_i$ defined by \eqref{bilinears} is invertible.
It can be used to link an arbitrary vector $a$ in the adjoint space $\RR^8$
with a traceless hermitian $3\times 3$ matrix $A$ via
\be
A = 2 a_i t_i\, ,
\ee
so that
\be
a_i = \Tr(At_i)\, .
\ee
When working in the adjoint space, one has at one's disposal three invariant tensors $\delta_{ij}$, $f_{ijk}$,
and $d_{ijk}$, which allow one to define $SU(3)$-invariant products.
Since the space $\RR^8$ is in one-to-one correspondence with the space of traceless hermitian $3\times 3$ matrices,
we provide, for completeness, a brief ``dictionary'' between the two spaces:
\bea
c_k = 2 f_{ijk} a_i b_j & \leftrightarrow & C = -i [A, B]\,,\label{ckdk}\\
d_k = 2 d_{ijk} a_i b_j & \leftrightarrow & D = \{A, B\} - {2 \over 3}\Tr(AB) \id_{3}\,.
\eea
The main benefit of working in the adjoint space is that the Higgs potential (\ref{potential}) is a quadratic form:
\be
V = M_0 r_0 + M_i r_i + \Lambda_{0} r_0^2 + L_i r_0r_i + \Lambda_{ij} r_i r_j\,.\label{V2}
\ee
There is a one-to-one map between the parameters $Y_{ab}$ and $Z_{abcd}$ of the original potential and the coefficients
$M_0$, $M_i$, $\Lambda_0$, $L_i$, and $\Lambda_{ij}$ in the potential (\ref{V2}).
All basis-invariant structural properties of the 3HDM scalar sector are encoded in the magnitudes and the relative orientations
of the above objects. It is this geometric picture that turned out to be extremely revealing in the 2HDM
\cite{Maniatis:2006fs, Ivanov:2006yq, Maniatis:2007vn, Ivanov:2007de}.

The geometric content of $\Lambda$ requires special attention.
Within the 2HDM, all rotations of the adjoint space $\RR^3$ can be realized as $SU(2)$ transformations in the space of
Higgs doublets, that is, by Higgs-basis rotations.
Therefore, the real symmetric $3 \times 3$ matrix $\Lambda$ can be always diagonalized by an appropriate basis change.
Thus, in such a basis, $\Lambda$ can always be specified based on its three eigenvalues.

In contrast, for a generic 3HDM it is generally not possible to diagonalize the arbitrary symmetric matrix $\Lambda$
by a Higgs-basis change, simply because not all $SO(8)$ rotations of the adjoint space can be generated by $SU(3)$ Higgs-basis rotations.
In addition, $\Lambda$ of the 3HDM contains a hidden vector which can be extracted by
a contraction with $d_{ijk}$.
Group-theoretically, the $36$ independent entries of $\Lambda$ transform as $\left(8 \otimes 8\right)_S$,
which decomposes into irreducible representations (irreps) as $1 \oplus 8 \oplus 27$. The projectors $P^{ij}{}_{i'j'}$ onto these three irreps are:
\be
P_1^{ij}{}_{i'j'} = {1 \over 8} \delta_{ij}\delta_{i'j'}\,,
\quad
P_8^{ij}{}_{i'j'} = {3 \over 5} d_{ijk}d_{ki'j'}\,,
\quad
P_{27}^{ij}{}_{i'j'} = {1\over 2}(\delta_{ii'}\delta_{jj'} + \delta_{ij'}\delta_{ji'})
- P_1^{ij}{}_{i'j'} - P_8^{ij}{}_{i'j'}\,.
\ee
Thus, $K_i = d_{ijk}\Lambda_{jk}$ is a vector in the adjoint space which transforms as an octet under $SU(3)$.
It can happen that $d_{ijk}\Lambda_{jk} = 0$, in which case the octet is absent in this decomposition.
However, further octets can appear as $d_{ijk}(\Lambda^2)_{jk}$, or in higher powers $p$ of $\Lambda$, $d_{ijk}(\Lambda^p)_{jk}$.
Even if all these octets vanish, $\Lambda$ can contain non-trivial content which can never be converted into a vector.

For later use, let us define several infinite series of vectors in the adjoint space which can be constructed from $M$, $L$, 
and $\Lambda$:
\bea
&& M^{(n)}_i:=(\Lambda^n)_{ij}M_j\,,\qquad L^{(n)}_i:=(\Lambda^n)_{ij}L_j
\label{list-ML} \\
&&K^{(n)}_i:=d_{ijk}(\Lambda^n)_{jk}\,,\qquad
K^{(m,n)}_i:=(\Lambda^m)_{ii'}d_{i'jk}(\Lambda^n)_{jk}\,,
\label{list-K}
\eea
where integers $n$ and $m$ start from zero.
Since they belong to $\RR^8$, only up to 8 among them
can be linearly independent.
In specific symmetry-constrained cases, the number of linearly independent vectors can be smaller.
Finally, using invariant tensors $f_{ijk}$ and $d_{ijk}$, one can construct further vectors.

\subsection{Self-alignment of \texorpdfstring{$\boldsymbol{x_8}$}{x8}}\label{section-self-alignment}

Consider a vector $a$ in the adjoint space. Its corresponding hermitian matrix $A$ can always be diagonalized by a basis change in the fundamental space, which, back in the adjoint space,
implies that $a$ is brought to the $(x_3,x_8)$ subspace.

Now consider the \textit{star product} based on $d_{ijk}$ which is defined by
\be
(a,b) \mapsto (a\ast b)_i:=\sqrt{3}d_{ijk}a_j b_k.
\ee
Remarkably, the star product preserves the $(x_3,x_8)$ subspace via a non-linear action. That is, if vectors $a$ and $b$ have non-zero components only in the $(x_3,x_8)$ subspace,
the same is true for $c=a\ast b$, as can easily be verified using the explicit components of $d_{ijk}$ listed in \eqref{tensor-dijk}.

Here, we use the star product to define the \textit{self-alignment} property of a vector $a$.
Having rotated $a$ to the $(x_3,x_8)$ subspace, one finds that also $c=a\ast a$
lies in the same subspace with components
\be
c_3 = 2 a_3 a_8\,, \quad c_8 = a_3^2 - a_8^2\,.
\ee
In the polar coordinates on the $(x_3,x_8)$ plane, this action preserves the norm of unit vectors
and acts on the angular variable as $\alpha \mapsto \pi/2 - 2\alpha$.
Hence, the three directions $\alpha = \pi/2, \pi/6$, and $5\pi/6$ are stable under this action (cf.\ \cite{Ivanov:2010ww} for more details on this construction).
The first direction corresponds to $a$ being aligned with $x_8$, while the other two directions can be brought to this form by an allowed basis change in the fundamental space (cyclic permutation of three doublets).

Let us denote the property of a direction being stable under the action of $\ast$ as \textit{self-alignment}.
We conclude that if a vector $a$ enjoys the self-alignment property, which can be checked in any basis,
then there exists a Higgs-basis in which $a$ is aligned with the direction $x_8$.

\section{Forms of \texorpdfstring{$\boldsymbol{CP}$}{CP}-symmetry in 3HDM \label{section-3HDMCP}}

Recall that a general $CP$ transformation in NHDM acts on the Higgs doublets
as $\phi_a \toCP X_{ab} \phi_b^*$, with {a} unitary matrix $X$.
Focusing specifically on the 3HDM scalar sector,
one can classify all $CP$ transformations into
four kinds. Each kind leads to models with different symmetry content.

\subsection{CP2}\label{subsection-CP2}

If the $CP$ transformation is of order 2 ($XX^* = \id$),
then there exists a basis (called the real basis) in which $X$ is the unit matrix.
In this basis, the $CP$ transformation takes the standard form: $\phi_a \toCP \phi_a^*$.
The necessary and sufficient condition for the potential (\ref{potential}) to be explicitly CP2-conserving
is that there is a basis in which all coupling coefficients are real.

In the adjoint space, the standard $CP$ transformation leaves invariant all vectors in the 5D subspace
$V_+ = (x_3,\, x_8,\, x_1,\, x_4,\, x_6)$
and flips the sign of all vectors in the 3D subspace $V_- = (x_2,\, x_5,\, x_7)$.
Therefore, a 3HDM potential is explicitly CP2-invariant if and only if there exists a basis in which the vectors $M, L \in V_+$
and $\Lambda$ is block-diagonal with a $5\times 5$ block in $V_+$ and a $3\times 3$ block in $V_-$.

The challenge then is to determine in a basis-invariant way whether a given potential indeed has this form in some basis. The necessary and sufficient basis-invariant algebraic conditions for the existence of a real basis in the 3HDM were formulated in \cite{Nishi:2006tg} in terms of eigenvectors of the matrix $\Lambda$.

\subsection{CP4}\label{subsection-CP4}

A $CP$ transformation of order 4 is a transformation $\phi_a \toCP X_{ab} \phi_b^*$,
whose matrix $X$, in an appropriate basis, takes the form\footnote{%
Unlike in \cite{Ivanov:2015mwl} and subsequent papers, here we assign the diagonal entry to $\phi_3$ in accordance with
Gell-Mann's matrices $\lambda_i$, which also single out the third component.}
\be
X = \mmmatrix{0}{1}{0}{-1}{0}{0}{0}{0}{1}\,.
\label{X-CP4}
\ee
In this basis, CP4 acts on the adjoint space as
\bea
&& x_8 \to x_8\,,\quad (x_1, x_2, x_3) \to -(x_1, x_2, x_3)\nonumber\\
&&x_4 \to x_6\,, \quad x_6 \to -x_4\,, \quad x_5 \to -x_7\,, \quad x_7 \to x_5\,.
\eea
In other words, $x_8$ stays unchanged, vectors in $(x_1,x_2,x_3)$ flip signs
(notice that this space does not coincide with $V_-$ of CP2), 
and vectors in $(x_4,x_6)$ and $(x_5,x_7)$ are rotated by $\pm\pi/2$.
For the potential to be CP4 invariant, $M$ and $L$ must be aligned with $x_8$,
while $\Lambda$ must have the block diagonal form
\be
\Lambda =\mtrx{
    \fbox{\phantom{A}}_{\,3\times 3}& 0& 0\\
    0 & \fbox{\phantom{A}}_{\,4\times 4}& 0\\
    0 & 0 & \Lambda_{88}
    },
\label{Lamij-CP4-block}
\ee
with an arbitrary $3\times 3$ block in the subspace $(x_1,x_2,x_3)$ 
and very specific correlation patterns in the $4\times 4$ block
of the $(x_4,x_5,x_6,x_7)$ subspace. We will discuss these patterns in section~\ref{section-cp4-implies}.

Detecting the presence of CP4 in a basis-invariant way for a generic 3HDM is a challenging question which we are setting out to answer in the present work.

\subsection{CP6}

$CP$ transformations of order $6$ are always equivalent to a regular CP$2$ and a $\Z_3$ family symmetry, 
defined so that the two transformations commute.
This is a consequence of the isomorphism $\Z_6 \simeq \Z_2 \times \Z_3$.
In the 3HDM, it turns out that imposing CP6 leads to another accidental symmetry of order 2 \cite{Ivanov:2011ae}.
Hence, requiring CP6 in the 3HDM results in an $S_3$ family symmetry, on top of which comes a $CP$ symmetry.
The full symmetry group then is $S_3 \times \Z_2^{CP}$.

In this sense, CP6 does not lead to a new 3HDM, as the same potential can be obtained by imposing $S_3$ and
considering its $CP$-conserving version.
Although this type of $CP$ symmetry is not the main target of our study, we will demonstrate below that
the basis-invariant features of this model in the adjoint space are very similar to the CP4 3HDM.
Therefore, we will need to investigate this model in order to find distinctions between the two.

\subsection{Higher-order \texorpdfstring{$\boldsymbol{CP}$}{CP} symmetry}

One can also build multi-Higgs models based on even higher order $CP$-symmetries.
Within 2HDMs or 3HDMs, however, imposing invariance under CP$k$ with $k > 6$ leads to
continuous accidental family symmetries in addition to the usual CP2.
The only way to impose CP$k$ on NHDMs without producing any accidental symmetries is to take $k$ to be a power of 2 and use more than three doublets~\cite{Ivanov:2018qni}.
Again, the Higgs potential in these models can be constructed explicitly
in a suitable basis,
but the basis-invariant conditions for the presence of
higher-order $CP$-symmetries are unknown.
We will not pursue this issue in the present work.

\section{Complete alignment in adjoint space: examples
\label{section-alignment-examples}}

In this section, we show that several symmetry-based 3HDMs display a remarkable structural feature in the adjoint space: \textit{complete alignment} of all vectors.
This refers to the situation where all possible vectors in the adjoint space, ---
such as $M^{(n)}$, $L^{(n)}$, $K^{(n)}$, and $K^{(m,n)}$, as defined in \eqref{list-ML} and \eqref{list-K},
as well as their
arbitrarily complicated contractions via tensor-networks of $d_{ijk}$, $f_{ijk}$, 
and $\delta_{ij}$ leaving one index uncontracted --- are all parallel to each other.
Notice that complete alignment implies, in particular, self-alignment of any vector.
In this definition, we assume that at least one of the basic vectors $M,L,K^{(n)}$ is nonzero.

\subsection{Necessary conditions for complete alignment}

Let us begin with some straightforward criteria which must be satisfied in
order for a model to exhibit complete alignment.

Assuming complete alignment and using the arguments of section~\ref{section-self-alignment}, one can immediately establish that
all vectors must, in an appropriate basis, belong to the $x_8$ subspace, e.g.\ 
\be
M = (0, \dots, 0, M_8)\,,\qquad L = (0, \dots, 0, L_8)\,. \label{ML-alignment}
\ee
For complete alignment to be realized, these vectors must be eigenvectors of $\Lambda$.
Therefore,
the presence of any non-zero vector leads to the following block-diagonal structure of $\Lambda$:
\be
\Lambda =\mtrx{
    \fbox{\phantom{A}}_{\,7\times 7}& \bs{0}\\
     \bs{0} & \Lambda_{88}
    }\, .
\label{Lamij-step2}
\ee
Next, the vector $K_i = d_{ijk} \Lambda_{jk}$, if non-zero, must also be aligned with $r_8$,
which means that $K_i = 0$ for $i = 1, \dots, 7$. These conditions constrain the $7\times 7$ block.
With the explicit expressions for $d_{ijk}$ in \eqref{tensor-dijk} and using the notation $\{ij\} \equiv \Lambda_{ij}$,
we deduce the following list of constraints:
\bea
K_1 = 0 &\quad\Rightarrow\quad & \phantom{-}\{46\}+\{57\} = 0\,,\nonumber\\
K_2 = 0 &\quad\Rightarrow\quad &  -\{47\}+\{56\} = 0\,,\nonumber\\
K_3 = 0 &\quad\Rightarrow\quad &  \phantom{-}\{44\}+\{55\} - \{66\} - \{77\} = 0\,,\label{space123}
\eea
and
\bea
K_4 = 0 &\quad\Rightarrow\quad &  \{16\}-\{27\} + \{34\} = 0\,,\nonumber\\
K_5 = 0 &\quad\Rightarrow\quad &  \{17\}+\{26\} + \{35\} = 0\,,\nonumber\\
K_6 = 0 &\quad\Rightarrow\quad &  \{14\}+\{25\} - \{36\} = 0\,,\nonumber\\
K_7 = 0 &\quad\Rightarrow\quad &  \{15\}-\{24\} - \{37\} = 0\,.\label{space4567}
\eea
Notice that the $3\times 3$ block within the subspace $(x_1,x_2,x_3)$ is completely unconstrained, and so are the elements $\{45\}$ and $\{67\}$.
For completeness, it is also useful to collect $2\sqrt{3}K_8=2(\{11\}+\{22\}+\{33\})-2\{88\}-\{44\}-\{55\}-\{66\}-\{77\}$.

We refer to eqs.~\eqref{ML-alignment}, \eqref{Lamij-step2}, \eqref{space123},
and \eqref{space4567}
as the minimal set of necessary conditions that a model must satisfy in order to exhibit complete alignment.
Satisfying these minimal conditions does not guarantee that $K^{(n)}_i = d_{ijk}(\Lambda^n)_{jk}$ 
for all higher powers of $n$ are aligned with the other vectors.
Imposing conditions \eqref{space123}, \eqref{space4567} to $K^{(2)}$, $K^{(3)}$, etc. leads to a system of coupled algebraic equations
on the entries of $\Lambda$ which we were unable to solve.

Instead of deriving these algebraic constraints explicitly,
let us first check several symmetry-based 3HDMs in order to see which of them give rise to
complete alignment. This exercise will eventually lead us to the 
necessary and sufficient basis-invariant conditions for existence of a CP4 symmetry.

\subsection{Abelian groups are insufficient}

We begin with abelian symmetry groups which were classified for
3HDMs in \cite{Ferreira:2008zy, Ivanov:2011ae}.
To be precise, we consider here groups with 1D irreps.

All of these models are based on subgroups of the maximal abelian family symmetry group $U(1) \times U(1)$.
It suffices here to check the case with the maximal symmetry, because if a parameter relation
does not hold in the maximally symmetric $U(1) \times U(1)$ case it will certainly not hold for any of its subgroups.

In the basis where the generators of the abelian symmetries are given by rephasing transformations,
the quadratic part of the potential can only contain terms 
$\phi_a^\dagger \phi_a~(a=1,2,3)$ with arbitrary coefficients $m_{aa}^2$.
In the adjoint space, these terms lead to non-zero and generally independent (i.e.\ not aligned with any special directions)
$M_3$ and $M_8$. The same observation applies to the vector $L$. Clearly, this violates the alignment conditions.

It is irrelevant for this argument whether or not an additional ordinary $CP$ symmetry is present,
because it does not constrain the $\phi_a^\dagger \phi_a$ terms.
A generalized $CP$ which mixes doublets may lead to such constraints, but it will also produce models with higher-dimensional irreps.
We conclude that 3HDMs based on groups with 1D irreps do not generically lead to alignment.

\subsection{CP4 implies complete alignment}\label{section-cp4-implies}

From a purely group-theoretical point of view, CP4 generates the cyclic group $\Z_4$.
However, the action of CP4 on the 3 Higgses cannot be fully diagonalized by a unitary basis transformation that conserves hypercharge.
Thus it makes sense to speak of $\phi_1$ and $\phi_2$ as forming a 2D irrep of CP4.
This is the smallest group featuring a 2D irrep in 3HDMs.

The CP4-symmetric 3HDM potential written in the basis where CP4
mixes the first and second doublets \`a la \eqref{X-CP4} is given in Appendix~\ref{section-CP4-3HDM},
where we also discuss simplifications through further basis-changes.
In this basis, the vectors $M$ and $L$ satisfy the minimal necessary conditions \eqref{ML-alignment}.
The matrix $\Lambda$ has the block-diagonal form \eqref{Lamij-CP4-block}
with an arbitrary\,%
\footnote{If desired, one can use the remaining $SO(2)$ reparametrization freedom generated by $t_2$ to eliminate some entries without disrupting the $4\times 4$ block structure.}
$3\times 3$ block in the subspace $(x_1,x_2,x_3)$ and with a $4\times 4$ block which can be brought to the form
\be
\fbox{\phantom{A}}_{\,4\times 4} \to
\mathrm{diag}(\lambda_4 + |\lambda_6|,\ \lambda_4 - |\lambda_6|,\ \lambda_4 + |\lambda_6|,\ \lambda_4 - |\lambda_6|)\,,
\label{block4x4-simplest-text}
\ee
by Higgs-basis changes.
With this structure, the minimal necessary conditions
in eqs.~\eqref{space123} and \eqref{space4567} are satisfied.
Furthermore, the higher powers of $\Lambda$ keep the same block-diagonal structure
and satisfy the alignment conditions to all orders.
Thus, all vectors in eqs.~\eqref{list-ML} and \eqref{list-K} are aligned.

Finally, picking any number of vectors among $M^{(n)}$, $L^{(n)}$, $K^{(m,n)}$
and contracting them in an arbitrarily complicated way via any network of invariant tensors made out of
$d_{ijk}$, $f_{ijk}$, and $\delta_{ij}$ will never give rise to a non-aligned vector.

The proof of this statement relies on a crucial feature of any simple Lie algebra:
any tensor network with loops can be written as a linear combination of tree-level invariant tensors
(this is called primitiveness assumption in \cite{Cvitanovic:2008zz}).
In the tree-level network, one can start with the outermost branches of the form $d_{ijk} a_j b_k$,
where $a$ and $b$ are any of the above vectors.
Since they both lie in the $x_8$ subspace, so does their contraction with $d_{ijk}$.
One continues this branch-cutting procedure to arrive at the conclusion that the only possibly non-zero component of the uncontracted index is $8$. No $f_{ijk}$ can appear in a non-vanishing tree-level contraction of this kind,
simply due to its anti-symmetric nature (all appearing external vectors are already aligned in the $8$ direction).

The overall conclusion for the CP4 3HDM is the following: all adjoint-space vectors
that one can possibly construct are completely aligned.

\subsection{\texorpdfstring{$\boldsymbol{D_4}$}{D4} implies complete alignment}

The 3HDM allows one to implement the symmetry group
$D_4 \simeq \Z_4 \rtimes \Z_2$.
It has been proven in~\cite{Ivanov:2011ae} that imposing a $\Z_4$ symmetry automatically leads to explicit $CP$ conservation.
Therefore, the \mbox{$D_4$-symmetric} 3HDM is also $CP$-conserving, with an order-2 $CP$ symmetry.
However, since the total symmetry group of the model is $D_4 \times \Z_2^{CP}$ it also includes a conserved order-4 $CP$ transformation.
Therefore, the \mbox{$D_4$-symmetric} 3HDM can be viewed as a particular case of the CP4 3HDM.
Since the alignment property certainly is not lost if the symmetry is enhanced, we find that also the
$D_4$ model features complete alignment.
In the real basis of $D_4$, the action of CP4 is given exactly by \eqref{X-CP4}; see appendix \ref{ap:o(2)} for explicit generators and the form of $\Lambda$.

\subsection{\texorpdfstring{$\boldsymbol{S_3}$}{S3} implies complete alignment}

Consider now a 3HDM with an $S_3$ family symmetry and with explicit $CP$-violation.
The general potential of the $S_3$ 3HDM 
is shown in the $\Z_3$ diagonal basis in Appendix~\ref{appendix-S3}
or the real basis in Appendix~\ref{ap:o(2)}.
The vectors $M$ and $L$ again satisfy the minimal necessary conditions \eqref{ML-alignment},
while the matrix $\Lambda$ takes the form:
\be
\Lambda = \left(\begin{array}{ccc|cccc|c}
a & \cdot & \cdot & c & s & c & s & \cdot\\
\cdot & a & \cdot & s & -c & -s & c & \cdot\\
\cdot & \cdot & b & \cdot & \cdot & \cdot & \cdot & \cdot\\ \hline
c & s & \cdot & d & \cdot & f & g & \cdot\\
s & -c & \cdot & \cdot & d & g & -f & \cdot\\
c & -s & \cdot & f & g & d & \cdot & \cdot \\
s & c & \cdot & g & -f & \cdot & d & \cdot \\ \hline
\cdot & \cdot & \cdot & \cdot & \cdot & \cdot & \cdot & h
\end{array}\right)\,,
\label{Lam-S3}
\ee
where dots indicate zero entries.
It is straightforward to verify that all conditions \eqref{space123} and \eqref{space4567} are satisfied,
showing that $K_i$ is aligned with $x_8$.
What is more important is that the pattern in \eqref{Lam-S3} reproduces itself
in all powers of $\Lambda$.
Therefore, also all vectors $K^{(n)}_i$ are aligned with $x_8$.
By the same arguments as used for the CP4 3HDM,
we conclude that this model features complete alignment, despite not having a CP4 transformation.

We stress that this model is generally $CP$-violating. Enforcing the symmetry group $S_3 \times Z_2^{CP}$
is equivalent to setting $s = 0$.\footnote{%
If both $s=0$ and $c=0$ the potential has a continuous symmetry.}
In this case, the subspace $(x_2,x_5,x_7)$ decouples from the rest, as was discussed in Section~\ref{subsection-CP2},
but it has no effect on alignment.
As mentioned before, we could have also arrived at this case by imposing a single generalized $CP$-symmetry CP6.
We conclude, therefore, that complete alignment is not sufficient for the existence of a CP4 symmetry.

\subsection{Distinguishing CP4 from \texorpdfstring{$\boldsymbol{S_3}$}{S3}}\label{subsection-distinguishing}

The conclusion of the previous subsection implies that complete vector alignment in the adjoint space
cannot, by itself, single out the CP4 3HDM models. One needs more basis-invariant information
to distinguish it from the $S_3$ 3HDM.

We prove here that the eigensystem of $\Lambda$ readily offers these criteria.
There are two versions of checks: using only eigenvalues and using eigenvectors.\footnote{The fact that we cannot diagonalize $\Lambda$ by the basis change
in the space of doublets $\phi_a$ does not matter.
We are not claiming here that we can bring $\Lambda$ to the diagonal form.
We are just saying that any real symmetric matrix can always be expanded via its eigensystem.}

We saw that $\Lambda$ of the CP4 3HDM has a completely generic symmetric $3\times 3$ block,
a very constrained $4\times 4$ block with two pairs of eigenvalues $\lambda_4 \pm \lambda_6$, and the $\Lambda_{88}$
entry. Thus, the eigenvalue degeneracy pattern of a \textit{generic} CP4 3HDM is $1+1+1+1+2+2$.
In the case of $CP$-violating $S_3$ 3HDM, 
the eigenvalues of (\ref{Lam-S3}) always come with the degeneracy pattern $1+1+2+2+2$,
which is best seen in a different basis where generators of $S_3$ are real, see Eq.~\eqref{Lam-S3-real}.
Thus, if, in addition to the complete alignment,
we observe the eigenvalue degeneracy pattern $1+1+1+1+2+2$,
we immediately conclude the presence of CP4.

This criterion allows us to detect a generic CP4.
It may happen that a valid CP4 model has some accidental
degeneracy among its eigenvalues, which would prevent the application of this criterion.
To cope with these cases,
we propose to look at the eigenvectors of $\Lambda$,
which are definitely different for CP4 and $S_3$ models,
regardless of the eigenvalues.

Indeed, the subspace $(x_1,x_2,x_3)$ is in a special position with respect to the direction $x_8$.
Take two adjoint space vectors $a$ and $b$ such that $a \in (x_1,x_2,x_3)$ and $b$ is along $x_8$.
Then, the vectors are $f$-orthogonal:
$F_i \equiv f_{ijk} a_j b_k = 0$.
Conversely,
when $b$ is along $x_8$ and $a$ is perpendicular to $b$
($a_i b_i=0$)
then $F_i =0$ implies that $a$ belongs to $(x_1,x_2,x_3)$.

One can check both statements, using the structure constants in Eq.~\eqref{tensor-fijk}.
Alternatively,
one can use the map of Section~\ref{section-adjoint} to construct the corresponding
traceless hermitian $3\times 3$ matrices $A$ and $B$.
They have the following structure:
\be
A = \mtrx{*&*&0\cr*&*&0\cr 0&0&0}\,, \qquad
B \propto \mtrx{1&0&0\cr 0&1&0\cr 0&0&-2}\,.\label{AB-commute}
\ee
These matrices commute and $\Tr(A B) = 0$.
Conversely,
if a traceless matrix commutes with $B$,
and it has no piece proportional to $B$,
then it must be of the form $A$.

Thus, we arrive at the following basis-invariant distinction between the CP4 and $S_3$ 3HDMs:
although both possess an eigenvector along $x_8$, only the CP4 3HDM possesses three other mutually orthogonal eigenvectors
which are both orthogonal and $f$-orthogonal to it. This criterion resolves the ambiguity.

\subsection{Groups with triplet representations}
\label{G:3D}

We have seen from examples that complete alignment follows from groups with doublet representations, including CP4; see remarks in section~\ref{section-cp4-implies}.
All these groups lie inside $SU(3)\rtimes \Z_2^{CP}$.
If we consider realizable groups with triplet representations, all of them containing either $A_4$ or $\Delta(27)$\,\cite{Ivanov:2014doa}, we know that 
\eq{
M_i=L_i=K_i=0\,.
}
Indeed, it is easy to see from the branching rules that there are no invariants of these groups within the adjoint of $SU(3)$ and a nonzero vector contracted with $r_i$ 
would not be invariant. Now, invariance of the potential \eqref{V2} implies 
\eq{
\label{invariance}
D_{ij}(g)M_j=M_i,\quad
D_{ij}(g)L_j=L_i,\quad
D_{ii'}(g)D_{jj'}(g)\Lambda_{i'j'}=\Lambda_{ij},
}
for any group element $g$ acting through a representation $D$ on vectors.
The fact that all vectors transform in the same way as $M_i,L_i,K_i$ together with the invariance properties \eqref{invariance} imply that any vector $F_i=F_i(M,L,\Lambda)$ built from the basic quantities of the potential is also invariant under the group.
Thus, all vectors in in Eqs.\,\eqref{list-ML} and \eqref{list-K} vanish as well, a fact that can also be checked explicitly.
Therefore, imposing invariance under a group with triplet representation leads to constraints stronger than complete alignment 
and so we will not consider these groups further in this paper.

%
%


\section{Detecting a CP4 symmetry}\label{section-to-CP4}

In the previous section we showed that the CP4 3HDM leads to the complete alignment and,
in addition, the eigenvectors and eigenvalues of $\Lambda$ possess certain characteristic properties.
Now, based on these results, we prove the converse statements, namely, that if certain
basis-invariant properties are satisfied, the model possesses a CP4 symmetry.

We will give two versions of these conditions.
First, we will formulate and prove the main Theorem, which unambiguously
detects the presence of a CP4 symmetry in all cases where it is present.
Checking these necessary and sufficient conditions requires, in addition to the complete alignment,
verification that the eigenvectors of the matrix $\Lambda$ satisfy certain properties.
Then, we will show a simplified version of these conditions,
which involve the eigenvalues but not the eigenvectors of $\Lambda$
and, therefore, may be computationally less expensive.
These simplified conditions can detect a CP4 symmetry in a \textit{generic} situation
but will miss the CP4 symmetry at certain special points in parameter space.

\subsection{Necessary and sufficient conditions for a CP4 symmetry}

\begin{theorem}
Consider the vectors $M$, $L$, and the matrix $\Lambda$ defined in \eqref{V2}.
Compute the eigenvectors of $\Lambda$.
The model possesses a CP4 symmetry if and only if all of the following conditions are satisfied:
\begin{itemize}
\item
there exists an eigenvector of $\Lambda$, denoted $v^{(8)}$ which is self-aligned in the sense of section~\ref{section-self-alignment}:
$d_{ijk}v^{(8)}_j v^{(8)}_k$ is parallel to $v^{(8)}_i$;
\item
there exist exactly three other mutually orthogonal
eigenvectors of $\Lambda$, denoted $v^{(\alpha)}$ with $\alpha = 1,2,3$,
which are $f$-orthogonal to $v^{(8)}$: $f_{ijk}v^{(8)}_j v^{(\alpha)}_k = 0$;
\item
the vectors $M$, $L$, $K_i = d_{ijk}\Lambda_{jk}$, and
$K^{(2)}_i = d_{ijk}(\Lambda^2)_{jk}$, if non-zero,
are parallel to $v^{(8)}$.
\end{itemize}
\end{theorem}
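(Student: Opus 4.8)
The ``only if'' direction I would simply read off from Sections~\ref{section-cp4-implies} and~\ref{subsection-distinguishing}: written in the basis \eqref{X-CP4}, a CP4-symmetric 3HDM exhibits complete alignment, so $M,L,K,K^{(2)}$ are all parallel to $x_8$; the eigenvector of $\Lambda$ along $x_8$ serves as $v^{(8)}$ and is self-aligned since $x_8$ is a fixed direction of the $\ast$ product; and the three eigenvectors of $\Lambda$ lying in $(x_1,x_2,x_3)$ are $f$-orthogonal to $x_8$. That ``exactly three'' holds --- i.e.\ that there is no fourth such eigenvector --- follows from the converse of the commuting-matrix observation \eqref{AB-commute}: any vector orthogonal and $f$-orthogonal to $x_8$ lies in the three-dimensional subspace $(x_1,x_2,x_3)$, and the $f$-product with $x_8$ is injective on $(x_4,x_5,x_6,x_7)$, so this splitting is respected even if $\Lambda$ happens to have accidental eigenvalue degeneracies. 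It therefore remains to establish the ``if'' direction.

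My plan for the converse is as follows. First, I would use self-alignment of $v^{(8)}$ together with Section~\ref{section-self-alignment} to pass to a Higgs basis in which $v^{(8)}$ points along $x_8$; because $x_8$ is then an eigenvector of the symmetric matrix $\Lambda$, the latter acquires the block form \eqref{Lamij-step2}. Second, the three mutually orthogonal eigenvectors $v^{(\alpha)}$ --- which one checks can be taken orthogonal to $v^{(8)}$ --- are $f$-orthogonal to $x_8$ and hence lie in $(x_1,x_2,x_3)$ by the converse statement below \eqref{AB-commute}; being three and mutually orthogonal they span $(x_1,x_2,x_3)$, which is therefore an invariant subspace of $\Lambda$, so $\Lambda$ takes the finer block form \eqref{Lamij-CP4-block}. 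Combined with $M,L\parallel x_8$, which is assumed, the only remaining task is to pin down the $4\times 4$ block.

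For that I would translate the last hypotheses into equations on the entries of the $4\times 4$ block. Thanks to the block form \eqref{Lamij-CP4-block}, all off-block entries of $\Lambda$ vanish, so the conditions \eqref{space4567} are automatic and ``$K\parallel x_8$'' reduces to exactly the three relations \eqref{space123} on the $4\times 4$ block; moreover $\Lambda^2$ inherits the form \eqref{Lamij-CP4-block}, so ``$K^{(2)}\parallel x_8$'' yields the same three relations with $\Lambda\to\Lambda^2$. Writing down the general solution of \eqref{space123} and substituting it into the $\Lambda^2$-relations produces a compact system of polynomial constraints on the seven free entries of the block. The claim to be proved is that, up to the residual reparametrization freedom that preserves \eqref{Lamij-CP4-block} and fixes $x_8$ --- generated by $t_1,t_2,t_3,t_8$, which act on $(x_4,x_5,x_6,x_7)$ through correlated $SO(2)$ rotations and leave $M,L$ untouched --- the solution set of this system is exactly the canonical CP4 block \eqref{block4x4-simplest-text}. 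Granting this, in the resulting basis the potential has $M,L\parallel x_8$ and $\Lambda$ of the form \eqref{Lamij-CP4-block}--\eqref{block4x4-simplest-text}, so it is manifestly invariant under the order-$4$ transformation $X$ of \eqref{X-CP4}; hence the model possesses a CP4 symmetry.

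The main obstacle is this last step, and it has two aspects. First, one must verify that the conditions on $K$ and $K^{(2)}$ are, already by themselves, strong enough to force the CP4 structure of the $4\times 4$ block, so that no analogous conditions on $K^{(3)},K^{(4)},\dots$ need be added; the mechanism one expects is that a $4\times 4$ matrix obeying \eqref{space123} has an internal $2\times 2$-block structure whose square ``closes'', which is precisely why the chain of conditions can be truncated at $K^{(2)}$. Second, one must control the degenerate eigenvalue configurations, where eigenvectors need not respect the $(x_1,x_2,x_3)\oplus(x_4,x_5,x_6,x_7)$ splitting and where the reparametrization argument must still deliver the canonical block. Disposing of these two points is what turns the necessary conditions collected in Section~\ref{section-alignment-examples} into the necessary-and-sufficient conditions of the Theorem.
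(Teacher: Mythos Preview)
Your proposal is correct and follows essentially the same route as the paper: establish the block form \eqref{Lamij-CP4-block} from self-alignment of $v^{(8)}$ plus $f$-orthogonality of the three eigenvectors, then constrain the $4\times 4$ block using $K\parallel x_8$ and $K^{(2)}\parallel x_8$. The paper simply carries out your Step~2 explicitly --- writing the general $K$-compatible $4\times4$ block, squaring it, and reading off the extra relations --- which collapse to $|\lambda_6|=|\lambda_7|$ and $\lambda_5^*\lambda_6=-\lambda_5\lambda_7^*$, i.e.\ precisely \eqref{conditions567}; this disposes of your first ``obstacle'' directly, without invoking the reparametrization freedom. Your second ``obstacle'' is in fact moot: once Step~1 delivers the block form \eqref{Lamij-CP4-block}, the analysis of the $4\times4$ block is purely algebraic and independent of any eigenvalue degeneracies, so no separate treatment of degenerate configurations is needed for the proof (the paper relegates such issues to implementation remarks in Section~\ref{section-conclusions}).
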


\begin{proof}
{\bf Step 1.}
As we explained in section~\ref{subsection-CP4},
the existence of a CP4 symmetry implies, among other, that,
in a suitable basis, the matrix $\Lambda$ takes the block-diagonal form \eqref{Lamij-CP4-block},
with subspaces $(x_8)$ and $(x_1, x_2, x_3)$ decoupled from each other and from the rest.
These two conditions can be formulated in a basis-invariant way and checked in any basis.

First, using the results of section~\ref{section-self-alignment}, the self-aligned eigenvector $v^{(8)}$ 
can be always pointed by a basis change along the direction $x_8$. 
In this basis, the matrix $\Lambda$ takes the form \eqref{Lamij-step2}.
Second, using the results of section~\ref{subsection-distinguishing} on eigenvectors,
we conclude that the three mutually orthogonal eigenvectors of $\Lambda$
which are $f$-orthogonal to $v^{(8)}$ can only belong to the $(x_1, x_2, x_3)$ subspace.
Thus, the other four eigenvectors lie in the $(x_4, x_5, x_6, x_7)$ subspace,
and one arrives at the desired block-diagonal structure \eqref{Lamij-CP4-block}.
Higher powers of the matrix $\Lambda$ also possess this block-diagonal form. 

{\bf Step 2.}
The $4\times 4$ block in the $(x_4, x_5, x_6, x_7)$ subspace must exhibit a certain pattern 
in order to be compatible with CP4 symmetry.
This pattern can be fixed by the conditions that the first seven components of 
the vectors $K_i = d_{ijk}\Lambda_{jk}$ and $K^{(2)}_i = d_{ijk}(\Lambda^2)_{jk}$ are zero,
or in other words that these two vectors, even if non-zero, lie in the $x_8$ subspace.

Indeed, the conditions \eqref{space4567} are automatically satisfied, 
while the conditions \eqref{space123} shape the $4\times 4$ block to the following form:
\be
a \cdot \id_4 +
\mtrx{
b & d & c & -s \\
d & -b & -s & -c \\
c & -s & b' & d' \\
-s & -c & d' & -b'}\,.\label{block4x4}
\ee
All 7 free parameters here are independent.
Since no further constraint follows from the vector $K$, we consider $K^{(2)}$.
The corresponding matrix $\Lambda^2$ also has the block-diagonal form with the $4\times 4$ block consisting of
the two matrices already shown in \eqref{block4x4} together with the new contribution
\be
\mtrx{
b^2 + d^2 + c^2 + s^2 & 0 & c(b+b') - s(d+d') & -c(d-d') - s(b-b') \\
\cdot & b^2 + d^2 + c^2 + s^2  & c(d-d') +s(b-b') & c(b+b') - s(d+d') \\
\cdot  & \cdot  & b^{\prime 2} + d^{\prime 2} + c^2 + s^2  & 0 \\
\cdot  & \cdot  & \cdot & b^{\prime 2} + d^{\prime 2} + c^2 + s^2 }\,.
\ee
where we use dots to denote the symmetric entries below the diagonal for clarity.
Applying the same constraints \eqref{space123} to this matrix, we obtain the conditions
\be
b^2 + d^2 = b^{\prime 2} + d^{\prime 2}\,, \quad
c(b+b') = s(d+d')\,,\quad
s(b-b') = -c(d-d')\,.\label{bcd-conditions}
\ee
They amount to new independent conditions.
If we define
\eq{
\la{6}\equiv b-id\,,~~\la{7}\equiv b'-id'\,,~~\la{5}\equiv c+is\,,
}
the conditions \eqref{bcd-conditions} are recast in the form
\be
|\lambda_6| = |\lambda_7|\,, \quad 
\re[\la{5}^*(\la6+\la7)]=0\,,\quad
\im[\la{5}^*(\la6-\la7)]=0\,.
\ee
The last two conditions can be combined into $\lambda_5^*\lambda_6 = - \lambda_5\lambda_7^*$,
which implies the relation between $\psi_{5,6,7}$
(the arguments of $\lambda_{5,6,7}$):
$\psi_6+\psi_7=2\psi_5 + \pi$. 
These constraints on the arguments and absolute values coincide 
with \eqref{conditions567} of the CP4 3HDM given in Appendix~\ref{section-CP4-3HDM}.
The $4\times 4$ block \eqref{block4x4} takes the same form as
\eqref{block4x4-full}, and overall we recover the matrix $\Lambda$
exactly of the same type as in the CP4 3HDM.
Since the $4\times 4$ block of the form \eqref{block4x4-full} can always be brought to the diagonal form \eqref{block4x4-simplest-text}, all higher-power vectors $K^{(n)}$, if non-zero, are also aligned with $x_8$. 
Finally, if the vectors $M$ and $L$ are non-zero, they are also aligned with $x_8$ and therefore do not spoil the CP4 invariance.
Thus, the model indeed possesses a CP4 symmetry and the proof is complete.
\end{proof}

\subsection{Detecting a generic CP4}

Checking the necessary and sufficient conditions formulated in Theorem 1 requires determination
of the full eigensystem of the matrix $\Lambda$.
However, in most cases, the presence of a CP4 symmetry can be deduced already from the complete alignment
and the eigenvalues of $\Lambda$, without computation of eigenvectors.
This statement comes from the observation made in section~\ref{subsection-distinguishing}
that the eigenvalues of the CP4 3HDM exhibit the degeneracy pattern $1+1+1+1+2+2$,
which is impossible in the other completely aligned 3HDM without CP4, the $S_3$ 3HDM.
Let us now make this statement precise.

\begin{theorem}[Generic CP4]
Consider vectors $M$, $L$, and the matrix $\Lambda$ defined in \eqref{V2}.
If the vectors $M$, $L$, and $K^{(n)}$ with $1 \le n \le 7$, at least one of which is nonzero, 
respect complete alignment, and if, in addition, $\Lambda$ has four non-degenerate eigenvalues, then the model has a CP4 symmetry.
\end{theorem}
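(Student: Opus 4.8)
The plan is to reduce the statement to Theorem~1, by showing that complete alignment together with the non-degeneracy hypothesis forces the three bullets of Theorem~1. Complete alignment implies that at least one of $M$, $L$, $K^{(n)}$ is a nonzero eigenvector of $\Lambda$ (for instance $\Lambda M=M^{(1)}$ must be parallel to $M$), and that this common direction is self-aligned in the sense of Section~\ref{section-self-alignment}; using the construction there I would pass to the Higgs basis in which this direction points along $x_8$. In that basis $M$ and $L$ lie along $x_8$, the matrix $\Lambda$ takes the block form \eqref{Lamij-step2}, and the conditions $K^{(n)}\parallel x_8$ for all $n$ reduce to \eqref{space123}, \eqref{space4567} and their higher-power analogues on the $7\times7$ block. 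This already delivers the first bullet of Theorem~1 (the self-aligned eigenvector $v^{(8)}=x_8$) and the third (the parallelism of $M$, $L$, $K$, $K^{(2)}$ to $v^{(8)}$), so the only thing left is the second bullet: the existence of exactly three mutually orthogonal eigenvectors of $\Lambda$ that are $f$-orthogonal to $v^{(8)}$. By the eigenvector analysis of Section~\ref{subsection-distinguishing}, this is equivalent to the $(x_1,x_2,x_3)$ subspace being $\Lambda$-invariant, i.e.\ to the vanishing (after an allowed basis change) of the block of $\Lambda$ coupling $(x_1,x_2,x_3)$ to $(x_4,x_5,x_6,x_7)$.

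To obtain this decoupling I would use the four non-degenerate eigenvalues. The key observation is that in the CP4 3HDM the $4\times4$ block of $\Lambda$ carries only doubly-degenerate eigenvalues ($\lambda_4\pm|\lambda_6|$, each twice), so all non-degenerate eigenvalues of $\Lambda$ sit in the decoupled $(x_1,x_2,x_3)$ block or along $x_8$; conversely, if the coupling block cannot be removed one lands on the $CP$-violating $S_3$ structure \eqref{Lam-S3}, whose spectrum is rigidly $1+1+2+2+2$, with only two non-degenerate eigenvalues. The strategy is therefore to show that a completely aligned $\Lambda$ with an irremovable coupling block has at most two non-degenerate eigenvalues — either by appealing to the analysis of Section~\ref{section-alignment-examples}, from which the $CP$-violating $S_3$ model and its more symmetric limits emerge as the only completely aligned 3HDMs that are not of CP4 type, or directly: inserting \eqref{space123}, \eqref{space4567} and their higher analogues into the general ansatz with a nonzero coupling block should organize the coupled sub-block so that its eigenvalues come in degenerate pairs, exactly as in \eqref{Lam-S3}. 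In either case four non-degenerate eigenvalues are incompatible with a nonvanishing coupling block, so the latter vanishes, the second bullet of Theorem~1 holds, and the model possesses a CP4 symmetry.

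I expect the genuine obstacle to be precisely this last step: ruling out an irremovable coupling between $(x_1,x_2,x_3)$ and $(x_4,x_5,x_6,x_7)$ using only the eigenvalue count. The brute-force route runs straight into the coupled algebraic system flagged as unsolved just after Eq.~\eqref{space4567}, so I anticipate relying on the structural input of Section~\ref{section-alignment-examples} rather than on explicit solution. One should also check the borderline cases in which $\Lambda_{88}$ happens to coincide with an eigenvalue of the $7\times7$ block, to make sure the four non-degenerate eigenvalues are still distributed so as to force the decoupling and single out the characteristic CP4 pattern \eqref{Lamij-CP4-block}.
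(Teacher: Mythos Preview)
Your reduction to Theorem~1 is sound for the first and third bullets, and you have correctly identified that the whole difficulty sits in the second bullet: forcing the $(x_1,x_2,x_3)$--$(x_4,x_5,x_6,x_7)$ coupling block to vanish from the eigenvalue count alone. However, the route you propose for this step does not go through. Appealing to the ``structural input of Section~\ref{section-alignment-examples}'' is not legitimate: that section only \emph{checks examples} (abelian, CP4, $D_4$, $S_3$, triplet groups) and never proves that complete alignment in a 3HDM can arise \emph{only} from CP4 or $S_3$. The remark after Eq.~\eqref{space4567} that the higher-$K^{(n)}$ system was ``unable to be solved'' makes this explicit. So you cannot conclude that a completely aligned $\Lambda$ with a nonvanishing coupling block must be of $S_3$ type with spectrum $1+1+2+2+2$; that is precisely the unproved classification you would need. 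Your alternative ``direct'' route is the same unsolved algebraic system, so neither branch closes.

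The missing idea, which the paper uses, is to bypass the algebraic system entirely by passing to eigenspace projectors. Write the $7\times 7$ block as $\Lambda_{ij}=\sum_\alpha \Lambda_\alpha P^{(\alpha)}_{ij}$ with $P^{(\alpha)}$ the projector onto the eigenspace of the distinct eigenvalue $\Lambda_\alpha$. Then the alignment conditions $d_{ijk}(\Lambda^n)_{jk}=0$ (inside the 7D subspace) read $\sum_\alpha (\Lambda_\alpha)^n S^{(\alpha)}_k=0$ for $n=1,\dots,p$, with $S^{(\alpha)}_k:=d_{ijk}P^{(\alpha)}_{ij}$. Since the $\Lambda_\alpha$ are distinct this is a Vandermonde system, so every $S^{(\alpha)}=0$ individually (the zero-eigenvalue case is handled by $\sum_\alpha P^{(\alpha)}=\id_7$ and $d_{ijk}\delta_{ij}=0$). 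For a \emph{non-degenerate} eigenvalue the projector is rank one, $P^{(\alpha)}_{ij}=e_i e_j$, and solving $d_{ijk}e_ie_j=0$ via \eqref{space123} forces $e_4=e_5=e_6=e_7=0$. Hence every non-degenerate eigenvector of the 7D block lies in $(x_1,x_2,x_3)$; four non-degenerate eigenvalues then pin one to $x_8$ and three to $(x_1,x_2,x_3)$, and orthogonality puts the rest in $(x_4,x_5,x_6,x_7)$. This delivers the block split \eqref{Lamij-CP4-block} and the second bullet of Theorem~1 without any case-by-case symmetry analysis, and Step~2 of Theorem~1 finishes the job.
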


\begin{proof}
As in Theorem 1, one first needs to establish the block-diagonal structure \eqref{Lamij-CP4-block}.
However, since we do not explicitly rely on the eigenvectors of $\Lambda$, the proof proceeds differently.

{\bf Step 1.} Take a nonzero vector among $M$, $L$, and $K^{(n)}$.
For definiteness we assume $K_i = d_{ijk}\Lambda_{jk} \neq 0$ but any will do.%
\footnote{A situation where all vectors vanish can be symmetry protected only by a symmetry inside $SU(3)$ with a three-dimensional representation;
see section~\ref{G:3D}.}
Since \textit{complete} alignment implies \textit{self-}alignment for all vectors,
$K$ is self-aligned and we use the arguments of Section~\ref{section-self-alignment}
to conclude that, after an appropriate basis change, $K$ can be made to lie exclusively in the $x_8$ subspace.
Furthermore, complete alignment implies that $K$ is an eigenvector of $\Lambda$, which leads us to the block-diagonal structure \eqref{Lamij-step2}.
By assumption, all vectors $K^{(n)}$ are also  aligned with $x_8$, and therefore the entries of $\Lambda^n$ for all $n$ must satisfy conditions \eqref{space123} and \eqref{space4567}.

{\bf Step 2.} Now we prove that, under the assumptions of this theorem, 
the $7\times 7$ block splits into $3\times 3$ and $4\times 4$ blocks as in \eqref{Lamij-CP4-block}.
Let us write $\Lambda$ within this 7D subspace via eigenvalues
and eigenvectors:
\be
\Lambda_{ij} = \sum_{\alpha} \Lambda_\alpha e_{i}^{(\alpha)} e_{j}^{(\alpha)}\,.\label{eigensystem}
\ee
Here, $\alpha$ runs over all eigenvalues, even if some of them are zero.
Now, it may happen that some eigenvalues $\Lambda_\alpha$ are degenerate with multiplicities $m_\alpha > 1$.
In order to take that into account, let us rewrite \eqref{eigensystem} as
\be
\Lambda_{ij} = \sum_{\alpha} \Lambda_\alpha P^{(\alpha)}_{ij}\,, \quad
P^{(\alpha)}_{ij} = \sum_{k_\alpha = 1}^{m_\alpha}e_{i}^{(k_\alpha)} e_{j}^{(k_\alpha)}\,.\label{eigensystem2}
\ee
Now the first summation runs over all \textit{distinct} eigenvalues $\Lambda_\alpha$,
while the second summation runs over all eigenvectors corresponding to this eigenvalue.
The total number $p$ of distinct eigenvalues within the 7D subspace is at most $7$
and at least 4 by assumption.
The matrices $P^{(\alpha)}_{ij}$ are the projectors on the corresponding subspaces;
they satisfy
\be
P^{(\alpha)} P^{(\beta)} = \delta_{\alpha\beta} P^{(\alpha)}\,, \quad \sum_\alpha P^{(\alpha)} = \id_{7}\,.\label{projectors-7D}
\ee
Next, also the matrices $\Lambda$ to the power $n$ are expanded in the form \eqref{eigensystem2}
with eigenvalues $(\Lambda_{\alpha})^n$.
So, the condition that $d_{ijk}(\Lambda^n)_{jk} = 0$ within the 7D subspace $(x_1,\dots, x_7)$
means that \textit{all} linear combinations
\be
\sum_{\alpha} (\Lambda_\alpha)^n S^{(\alpha)}_k= 0\,,\quad \mbox{where} \quad
S^{(\alpha)}_k := d_{ijk} P^{(\alpha)}_{ij}\,.\label{linearS}
\ee
Writing linear combinations (\ref{linearS}) for $n = 1, \dots, p$,
we obtain each time a linear combination of $p$ vectors $S^{(\alpha)}_k$ in the 7D space.
Since all $\Lambda_\alpha$ are distinct, this implies that each individual vector $S^{(\alpha)}_k = 0$.

The above statement applies to all vectors $S^{(\alpha)}_k$ which correspond to 
non-zero eigenvalues $\Lambda_\alpha$. However, even if $\Lambda$ has a zero eigenvalue,
which we denote as $\Lambda_0 = 0$, the corresponding vector $S^{(0)}_k = 0$, too.
Indeed, since the projectors sum up to $\id_7$,
and since $d_{ijk}\delta_{ij} = 0$ within the 7D subspace,
we get:
\be
S^{(0)}_k =  d_{ijk} P^{(0)}_{ij} = d_{ijk} \left(\delta_{ij} - \sum_{\alpha \not = 0} P^{(\alpha)}_{ij}\right)
= d_{ijk} \delta_{ij} - \sum_{\alpha \not = 0} S^{(\alpha)}_k = 0\,.
\ee

The essence of the above trick deserves emphasis:
Instead of constraining a generic matrix $\Lambda$ and its powers,
we constrain their eigenspace projectors $P^{(\alpha)}$, for which taking powers has no effect.

Now, consider a non-degenerate eigenvalue $\Lambda_1$.
Then $P^{(1)}_{ij} = e_{i}^{(1)} e_{j}^{(1)}$, and we are looking for solutions of
$e_{i}^{(1)} e_{j}^{(1)}d_{ijk} = 0$ within the 7D subspace.
We can solve this set of equations via the same Eqs.~(\ref{space123}) and (\ref{space4567}), where each entry $\{ij\}$ is now understood as the product of the two components of the same eigenvector, $e_i e_j$. 
Solving simultaneously the three conditions (\ref{space123}),
we conclude that the components $e_{4,5,6,7} = 0$, while $e_{1,2,3}$ are unconstrained.
Thus, an eigenvector corresponding to a non-degenerate eigenvalue (in the 7D subspace) must lie within the subspace $(x_1,x_2,x_3)$ and nowhere else.

Therefore, if we require the full matrix $\Lambda$ to possess four non-degenerate eigenvalues,
this can only happen if one of the corresponding four eigenvectors lies in $x_8$ 
and the other three belong to the $(x_1,x_2,x_3)$ subspace.
The remaining four eigenvectors, by orthogonality, then must belong to the $(x_4,x_5,x_6,x_7)$ subspace,
and they must have at least pairwise degenerate eigenvalues.
Thus, we arrive at the split block-diagonal form \eqref{Lamij-CP4-block}.

{\bf Step 3} goes exactly as step $2$ of Theorem 1 and completes the proof.
\end{proof}

Theorem 2 proposes sufficient conditions for a 3HDM to contain a CP4 symmetry.
One first needs to check that the vectors $M$, $L$, $K^{(n)}$ are self-aligned and parallel,
and that the common direction is an eigenvector of $\Lambda$.
Then, one needs to compute the eigenvalues of $\Lambda$.
If there are four non-degenerate eigenvalues, we detect the presence of a CP4 symmetry.

If there are fewer than four non-degenerate eigenvalues, this method fails as it can 
miss a valid CP4-symmetric model.
This algorithm also fails in the case when $M$, $L$, $K^{(n)}$
are all zero vectors,
since in this case it is impossible to identify the self-aligned eigenvector without actually 
computing the full eigensystem. However, this only happens at exceptional isolated points in the parameter space 
or when a symmetry with triplet representation is present.
For a generic scan in the parameter space, the sufficient conditions still represent a useful check.

\section{Discussion and conclusions}\label{section-conclusions}

\subsection{Subtleties with degenerate eigenvalues}

Theorem 1 does not only present the necessary and sufficient conditions for a 3HDM to possess
a CP4 symmetry, but also proposes a concrete algorithm which can be employed in any basis.
However, when implementing it, one may face a technical difficulty when the model 
has degeneracy among eigenvalues beyond the generic $1+1+1+1+2+2$ pattern.

First, $\Lambda_{88}$ may be degenerate with other eigenvalues of $\Lambda$.
In this case it may happen that only one direction out of the entire corresponding eigenspace
satisfies the self-alignment property. In order not to miss a valid CP4 symmetry in such situations,
one may need to parametrize the vectors of this eigenspace and check if any of them exhibits self-alignment.

Next, there may exist more than one direction in the eigenspace corresponding to a degenerate eigenvalue exhibiting the self-alignment property.
One can pick up any of them, denote it as $v^{(8)}$, and then search for three other
mutually orthogonal eigenvectors which would be both orthogonal and $f$-orthogonal to $v^{(8)}$.
However if we fail to find such triplet of eigenvectors, it does not yet mean that the model has no CP4 symmetry.
It may be just the wrong choice of the self-aligned direction which was associated with $v^{(8)}$.
One then would need to check all possible assignments for $v^{(8)}$. 
Only if none of them leads to the desired triplet of eigenvectors we can claim that the model has no CP4 symmetry.

These complications call upon a refined concrete algorithm which would be capable of
detecting a CP4 symmetry in all possible situations of accidental degeneracies among eigenvalues.
Constructing such algorithm is delegated to a future work.
For now, we stress that these complications are just technical and do not jeopardize 
the proof of Theorem 1.

\subsection{CP4 symmetry vs. CP4 3HDM}

Theorem 1 gives the necessary and sufficient conditions for a 3HDM to possess
\textit{a} CP4 symmetry. However, in addition to CP4, a model could possess other symmetries.
The total symmetry group then would be larger and, as shown in \cite{Ivanov:2012fp}, 
it would always automatically contain a CP2 symmetry. Thus, the CP4 symmetry would lose its defining role, as
the same model could be built by imposing CP2 and an appropriate family symmetry.

If one wants to single out \textit{the} CP4 3HDM model \cite{Ivanov:2015mwl} 
where CP4 is the only symmetry, one first must check the presence 
of a CP4 symmetry and then verify the \textit{absence} of any additional CP2.
In Appendix~\ref{section-CP4-CP2}, we describe all options for extending a
CP4 symmetric 3HDM by CP2. The different options depend on whether CP2 and CP4 commute or not.
In the commuting case, the minimal enhancement of the total symmetry group is to $D_4 \times \Z_2^{CP}$,
which is also studied in Appendix~\ref{ap:o(2)}.
In the non-commuting case,
the minimal resulting symmetry is $(\Z_2 \times \Z_2)\rtimes Z_2^{CP}$.
In Appendix~\ref{appendix-recognition-CP2}, we describe a basis-invariant algorithm to distinguish 
these models from the pure CP4 model which works in both cases.
For this, one needs to construct $f$-products amongst two pairs of eigenvectors of the $(x_4, x_5, x_6, x_7)$ subspace
and check if any of them is an eigenvector of $\Lambda$. 
This feature then allows for a straightforward algorithmic implementation to distinguish between pure CP4 and higher symmetries.

The problem of basis-invariant recognition of an additional CP2 symmetry
in CP4 3HDM has recently also been tackled in \cite{Haber:2018iwr}. 
Starting with a CP4 3HDM and assuming that the CP4 symmetry is unbroken,
the authors discovered a basis-invariant ${\cal N}$ in the form of a high-degree polynomial
of the quartic coefficients of the potential and the vacuum expectation values of the doublets.
This invariant is zero if and only if the model possesses an additional CP2 symmetry that commutes with CP4.
In Appendix~\ref{appendix-recognition-CP2}, we derive an algorithm,
which is short, transparent, covers both commuting and non-commuting cases,
and does not rely on vacuum expectation values.

\subsection{Spontaneous breaking of CP4}

Suppose the presence of a CP4 symmetry together with the absence of any additional CP2 symmetry is detected 
in a basis-invariant way.
Then, there exits an immediate basis-invariant criterion to decide 
whether a chosen vector of vacuum expectation values, $\lr{\phi_a}=v_a$, is CP4 conserving or not.
In particular, there is no need to explicitly reconstruct the CP4 transformation.

The criterion for CP4 conservation after minimization is that the vector $\lr{r}$ is self-aligned
and parallel to $M$, $L$, and $K$'s and is, therefore, an eigenvector of $\Lambda$.
If this property does not hold, CP4 is spontaneously broken.
This criterion follows because in the standard CP4 basis of \eqref{X-CP4} a CP4 conserving vacuum expectation value has (after appropriate rephasing) the form $\lr{\phi}=v(0,0,1)$ 
which implies that $\lr{r}$ is aligned to $x_8$.

\subsection{Conclusions}

In summary, we brought up and solved the question of basis-invariant recognition
of the presence of a CP4 symmetry in 3HDM. Since this question cannot be solved
with the traditional technique of constructing $CP$-odd basis invariants and then setting them to zero,
we developed a new approach, which makes use of the
adjoint space constructions and, in particular, the eigensystem
of the matrix $\Lambda$. 
The final result is a set of necessary and sufficient conditions for the presence of a CP4 symmetry formulated as Theorem 1.
In generic settings, the presence of CP4 can also be
determined with a computationally less expensive approach
which only requires the knowledge of eigenvalues but not eigenvectors of $\Lambda$. 
In addition, we have presented necessary and sufficient basis-invariant criteria to detect the presence of other symmetries beyond CP4, 
as well as to determine whether CP4 is spontaneously broken by the Higgs vacuum expectation value.

The presented algorithms can be implemented in parameter scans of the scalar sector of 3HDM.
In particular, they offer an efficient path to explore the intriguing phenomenology
of CP4 3HDM without the need to stay in one particular basis. 

\subsection*{Acknowledgments}
I.P.I.\ acknowledges funding from the Portuguese
\textit{Fun\-da\-\c{c}\~{a}o para a Ci\^{e}ncia e a Tecnologia} (FCT) through the FCT Investigator 
contract IF/00989/2014/CP1214/CT0004 under the IF2014 Programme,
and through the contracts UID/FIS/00777/2013, CERN/FIS-NUC/0010/2015, and PTDC/FIS-PAR/29436/2017,
which are partially funded through POCTI (FEDER), COMPETE, QREN, and the EU.
I.P.I.\ and J.P.S.\ also acknowledge the support from National Science Center, Poland, via the project Harmonia (UMO-2015/18/M/ST2/00518).
The work of A.T.\ has been supported by the German Science Foundation (DFG) within the SFB-Transregio TR33 ``The Dark Universe''.
C.C.N.\ acknowledges partial support by Brazilian funding agencies Fapesp through grant 2014/19164-6 and CNPq through grant 308578/2016-3.

\appendix

\section{CP4 3HDM potential: from the most general to the simplest}\label{section-CP4-3HDM}

Here, we summarize the results on the Higgs potential of
the CP4 3HDM \cite{Ivanov:2015mwl}
and the basis-change freedom available for its simplification. 
For a recent related study, see \cite{Haber:2018iwr}.

Given a $CP$ transformation of order 4 acting on the Higgs fields as
$\phi_a \mapsto X_{ab}\phi_b^*$ with some matrix $X$ satisfying $XX^* \not = \id$,
$(XX^*)^2 = \id$, we can always find a basis in which $X$ takes the following form:
\be
X = \mmmatrix{0}{1}{0}{-1}{0}{0}{0}{0}{\mathrm{e}^{i\beta}}\,.\label{X-CP4_2}
\ee
The general 3HDM potential invariant under CP4 with this matrix $X$
is $V = V_0 + V_{CP4}$ \cite{Ivanov:2011ae}, where
\bea
V_0 &=& - m_{11}^2 (\phi_1^\dagger \phi_1 + \phi_2^\dagger \phi_2) - m_{33}^2 \phi_3^\dagger \phi_3
+ \lambda_1 \left[(\phi_1^\dagger \phi_1)^2 + (\phi_2^\dagger \phi_2)^2\right] + \lambda_2 (\phi_3^\dagger \phi_3)^2 +
\label{V0}\\
&+&\lambda_3 (\phi_3^\dagger \phi_3) (\phi_1^\dagger \phi_1 + \phi_2^\dagger \phi_2)
+ \lambda'_3 (\phi_1^\dagger \phi_1) (\phi_2^\dagger \phi_2) + \lambda_4 \left(|\phi_1^\dagger \phi_3|^2 + |\phi_2^\dagger \phi_3|^2\right)
+ \lambda'_4 |\phi_1^\dagger \phi_2|^2\,,\nonumber
\eea
and
\be
V_{CP4} = 
\lambda_5 (\phi_1^\dagger\phi_3)(\phi_2^\dagger\phi_3)
+ {\lambda_6 \over 2} (\phi_1^\dagger\phi_3)^2 + {\lambda_7 \over 2} (\phi_2^\dagger\phi_3)^2 +
{\lambda_8 \over 2} (\phi_1^\dagger \phi_2)^2 + \lambda_9(\phi_1^\dagger\phi_2)\left(\phi_1^\dagger\phi_1-\phi_2^\dagger\phi_2\right) + h.c.\,.\label{V-CP4}
\ee
Here, $m_{11}^2$, $m_{22}^2$, and $\lambda_{1,2,3,4}$ are real while $\lambda_5$ through $\lambda_9$ can be complex.
Their phases are denoted as $\psi_{5,\dots,9}$ and their sines and cosines are denoted as $s_{5,\dots,9}$ and $c_{5,\dots,9}$.
Not all of them are independent, though. The following conditions must be met:
\be
|\lambda_6|=|\lambda_7|\,,\quad \psi_6 + \psi_7 = 2\psi_5 + \pi = - 2\beta\,.\label{conditions567}
\ee
In the adjoint space, we find that $M_i = (0,\dots, 0, M_8)$ and $L_i = (0,\dots, 0, L_8)$, with
\be
M_8 = {2\over\sqrt{3}}(m_{11}^2 - m_{33}^2)\,, \qquad L_8 = {4 \over 3}(\lambda_1 - \lambda_2) + {2 \over 3}(\lambda_3' - \lambda_3)\,.
\ee
The matrix $\Lambda$ has the block-diagonal form \eqref{Lamij-CP4-block}
with the blocks
\be
\Lambda_{88} = \fr{2\lambda_1 + 4\lambda_2 - 4\lambda_3 + \lambda_3'}{3}\,,\qquad
\fbox{\phantom{A}}_{\,3\times 3} = \left(
\begin{array}{ccc}
\lambda_4' + |\lambda_8| c_8 & \phantom{\lambda_4'}-|\lambda_8|s_8 & \phantom{-}2|\lambda_9|c_9 \\
\cdot & \lambda_4' - |\lambda_8| c_8  & -2|\lambda_9|s_9 \\
\cdot & \cdot & 2\lambda_1 - \lambda_3'
\end{array}
\right)\label{block3x3}
\ee
and
\be
\fbox{\phantom{A}}_{\,4\times 4} = \lambda_4 \cdot \id_4 + \left(
\begin{array}{cccc}
|\lambda_6|c_6 & -|\lambda_6|s_6 & \phantom{-}|\lambda_5|c_5 & -|\lambda_5|s_5 \\
\cdot & -|\lambda_6|c_6 & -|\lambda_5|s_5 & -|\lambda_5|c_5 \\
\cdot & \cdot & \phantom{-}|\lambda_6|c_7 & -|\lambda_6|s_7 \\
\cdot & \cdot & \cdot & -|\lambda_6|c_7
\end{array}
\right)\,,\label{block4x4-full}
\ee
where the dots below the diagonal indicate the repeated entries of a symmetric matrix.
We see that the blocks in the subspaces $x_8$ and $(x_1, x_2, x_3)$ are completely unconstrained,
while the $4\times 4$ block in the subspace $(x_4,x_5,x_6,x_7)$ depends on 5 free parameters:
$\lambda_4$, $|\lambda_5|$, $|\lambda_6|$,
as well as three phases $\psi_5$, $\psi_6$, $\psi_7$ subject to one condition \eqref{conditions567}.

The CP4 3HDM potential \eqref{V-CP4} can be simplified by the residual freedom of basis changes that preserve the matrix $X$ of \eqref{X-CP4_2} up to rephasings.\footnote{%
The corresponding groups are the $SU(2)$ that acts on the first two doublets and the $U(1)$ generated by $t_8$.}
In particular, one can set $\beta=0$, make $\lambda_6 = \lambda_7$ real and positive, and eliminate $\lambda_5$ \cite{Ferreira:2017tvy},
so that the potential \eqref{V-CP4} becomes
\be
V_{CP4} = {\lambda_6 \over 2} \left[(\phi_1^\dagger\phi_3)^2 + (\phi_2^\dagger\phi_3)^2 
+ (\phi_3^\dagger\phi_1)^2 + (\phi_3^\dagger\phi_2)^2\right] +
\left[{\lambda_8 \over 2} (\phi_1^\dagger \phi_2)^2 + \lambda_9(\phi_1^\dagger\phi_2)\left(\phi_1^\dagger\phi_1-\phi_2^\dagger\phi_2\right) + h.c.\right].\label{V-CP4-simple}
\ee
The coefficients $\lambda_8$ and $\lambda_9$ stay complex. Note that there is no basis in which all coefficients are simultaneously real \cite{Ivanov:2015mwl}.
This amounts to bringing the $4\times 4$ block to a diagonal pairwise degenerate form,
\be
\fbox{\phantom{A}}_{\,4\times 4} \to
\mathrm{diag}(\lambda_4 + \lambda_6,\ \lambda_4 - \lambda_6,\ \lambda_4 + \lambda_6,\ \lambda_4 -\lambda_6)\,,
\label{block4x4-simplest}
\ee
while the $\Lambda_{88}$ entry and the $3\times 3$ block keep their general form.
This basis choice clearly demonstrates that higher powers of $\Lambda$ feature the same block diagonal pattern:
a generic $3\times 3$ block, a generic $88$ component, and a diagonal pairwise degenerate $4\times 4$ block.
This structure is still form-invariant under the $SO(2)$ reparametrization group generated by $t_2$.

\section{Distinguishing the CP4 3HDM from models that also possess CP2}\label{section-CP4-CP2}

Let us see how the CP4 3HDM is further constrained by imposing additional symmetries and how we can
detect this in a basis invariant way.
To do this, we start working in a basis in which the CP4 symmetry is generated by the matrix $X$  in \eqref{X-CP4}.
The full classification of discrete symmetry-based 3HDMs is presented in \cite{Ivanov:2012ry, Ivanov:2012fp}.
Upon imposing additional symmetries the total symmetry group is enhanced to one of these models or to a model with continuous symmetry.

It is common to all extensions of the CP4 model that they automatically contain conserved CP2 transformations.
Therefore, starting with the CP4 3HDM, a minimal extension of the symmetry group is to just add a CP2 transformation.
The newly arising combined $CP$ transformation, \mbox{$(\mathrm{CP2})^{-1} \cdot \mathrm{CP4} \cdot \mathrm{CP2}$}, then is again of order $4$.
Requiring that there is a minimal
symmetry enhancement, it should generate the
same group as the original CP4 transformation.
There are two options to do this: the resulting order-4 $CP$ transformation
can either be the original CP4 or its inverse. We will now derive the total resulting symmetry for either case
and establish a basis-invariant criterion to detect the presence of the resulting extra symmetries.

\subsection{Commuting CP4 and CP2}

If  $(\mathrm{CP2})^{-1} \cdot \mathrm{CP4} \cdot \mathrm{CP2} = \mathrm{CP4}$,  the two $CP$ transformations commute.
In a basis where the CP4 transformation is given by the matrix $X$ in Eq.~\eqref{X-CP4}, 
the desired CP2 transformation is generated by a matrix $X_2$ that solves $X_2 X^* X_2=X$. 
The most general solution reads
\be
X_2 = \mmmatrix{\mathrm{e}^{i\alpha}\cos{\varphi}}{i \sin{\varphi}}{0}{i \sin{\varphi}}{\mathrm{e}^{-i\alpha}\cos{\varphi}}{0}{0}{0}{\pm 1}\,,\label{X-CP2_1}
\ee
with two free real parameters $\alpha$ and $\varphi$. The possible minus sign in the third component can always be removed by a global rephasing of $\phi$.
Since $X$ is invariant under all $SU(2)$ basis changes in the first two components, so is the equation that we have used to derive $X_2$.
Hence, the matrix $X_2$ must be form-invariant under these basis changes. Therefore, we can repeat the steps of Appendix~\ref{section-CP4-3HDM} and write the potential in a basis where it takes the form \eqref{V-CP4-simple} without 
changing the form of \eqref{X-CP2_1}. Within this basis, 
we then require that the CP2 transformation generated by $X_2$ should only minimally enhance the symmetry.
This means we require that it should not impose any constraint on $\lambda_6$ or equivalently, leave the $4\times 4$ block of $\Lambda$ invariant.
This requirement then restricts the free parameters of \eqref{X-CP2_1} to $\alpha,\varphi=0, \pi$.\footnote{%
Taking any other choice within the general solution for CP2 will also commute with CP4, but these choices will unavoidably enhance the symmetry even
further. Clearly it is necessary for our argument to detect the minimal possible symmetry enhancement of CP4, while the argument 
is not spoiled by the possibility of an even larger possible symmetry.}
That is, in the basis \eqref{V-CP4-simple} the minimal symmetry enhancement
is given by a CP2 transformation that corresponds to conjugation with the identity matrix.
The other possibility	 of having $X_2=\diag(-1,-1,1)$ is equivalent because this element is already contained in the group generated by CP4.

The product of the $CP$ transformations, $a_4:=\mathrm{CP2} \cdot \mathrm{CP4}$, then is a family symmetry of order 4 with
transformation matrix $X$.
Imposing this transformation on the potential \eqref{V-CP4-simple} forces the coefficients $\lambda_8, \lambda_9$ to be real.
For the matrix $\Lambda$, the reality of all coefficients implies that the $3\times 3$ block \eqref{block3x3} 
splits into a $2\times 2$ block in the subspace $(x_1, x_3)$ and the $\Lambda_{22}$ entry.
In other words, the direction $x_2$ becomes an eigenvector of $\Lambda$.
This additional special eigenvector is enough to determine the presence of a symmetry beyond CP4 in a basis invariant way as we will discuss below.

We remark that upon imposing the minimal CP2 extension here, there appears an accidental symmetry such that the 
total symmetry group of the model actually is $D_4 \times \Z_2^{CP}$ \cite{Ivanov:2011ae,Ivanov:2014doa}.
An alternative approach, therefore, would be to study the $D_4$-symmetric 3HDM
from the start and we do this in Appendix~\ref{ap:o(2)}.
We note that when the $D_4$ transformations are fixed, there is no reparametrization freedom left and in the basis of \eqref{block4x4-simplest} with standard real representation for $D_4$, $\Lambda$ is completely diagonal.

As a digression from the main line of arguments, let us review how 
the Higgs family symmetry group accidentally enlarges to $D_4$, \cite{Ivanov:2011ae,Ivanov:2014doa}.
If a 3HDM possesses a $\Z_4$ family symmetry generated by $a_4$, then there appears an accidental CP2$'$ symmetry
which \textit{does not} commute with $a_4$.
This is best seen in the basis where $a_4 = \mathrm{diag}(-i, i, 1)$ and the desired CP2$'$ is based on a diagonal matrix.
However, all $CP$ symmetries we have identified so far can be generically written as $a_4^k\cdot \mathrm{CP4}$.
All of them commute with $a_4$ and, therefore, none of them can play the role of CP2$'$. 
Back in the original basis, the desired CP2$'$ is based on the orthogonal matrix $X_2$ given below in Eq.~\eqref{X-CP2_22}.
Stripping it off the conjugation, one gets the desired symmetry $a_2$ of the same form.
Now, since $a_2^{-1} a_4 a_2 = a_4^{-1}$, they, by themselves, generate the family symmetry group $\langle a_2, a_4\rangle \simeq D_4$,
on top of which we have the standard $CP$ symmetry. Since $D_4$ is expressed in the real basis, this $CP$ commutes with it,
making the total symmetry group $D_4 \times \Z_2^{CP}$. 

\subsection{Non-commuting CP4 and CP2}

If $(\mathrm{CP2})^{-1} \cdot \mathrm{CP4} \cdot \mathrm{CP2} = (\mathrm{CP4})^{-1}$, the most general
transformation matrix of the new CP2 symmetry is given by the solution to the equation $X_2X^*X_2=X^\mathrm{T}$, which is
\be
X_2 = \mmmatrix{\mathrm{e}^{i\alpha}\cos{\varphi}}{\phantom{-}\sin{\varphi}}{0}{\sin{\varphi}}{-\mathrm{e}^{-i\alpha}\cos{\varphi}}{0}{0}{0}{\pm1}\,.\label{X-CP2_2}
\ee
Again, $X_2$ is form-invariant under the basis changes which lead to \eqref{V-CP4-simple} 
and we again decide to work in that basis.
The potential minus sign in the third component can again be removed by a global rephasing. 
The requirement of minimal symmetry enhancement then restricts $\alpha=0, \pi$, implying that the minimal additional symmetry is given by CP2 
generated by complex conjugation together with a matrix 
\be
X_2 = \mmmatrix{\cos{\varphi}}{\phantom{-}\sin{\varphi}}{0}{\sin{\varphi}}{-\cos{\varphi}}{0}{0}{0}{1}\,.\label{X-CP2_22}
\ee
In this case, $\lambda_8$ and $\lambda_9$ are constrained in such a way 
that the $3\times3$ block of $\Lambda$ acquires an eigenvector in the $(x_1,x_3)$ plane. 
By the remaining $SO(2)$ basis freedom generated by $t_2$ this eigenvector can always 
be aligned with either the $x_1$ or $x_3$ directions.
In the former case $\varphi=0$ in \eqref{X-CP2_22}, implying real $\lambda_8$ and imaginary $\lambda_9$,
while in the latter case $\varphi=\pi/2$, implying $\lambda_9 = 0$ without constraining $\lambda_8$.

No additional accidental symmetries appear in this case, and the total symmetry group 
is given by $(\Z_2 \times \Z_2)\rtimes \Z_2^{CP}$, 
where the family symmetry group $\Z_2 \times \Z_2$ is generated by $\mathrm{CP2}\cdot\mathrm{CP4}$ and $\mathrm{CP4}\cdot\mathrm{CP2}$.

\subsection{Basis-invariant recognition of an extra CP2}\label{appendix-recognition-CP2}

Summarizing the above cases of various CP2 symmetries in addition to CP4, 
we can state that all of them lead to a
simplification of the $3\times 3$ block \eqref{block3x3},
namely, (at least) one of the three directions $x_1$, $x_2$, $x_3$ becomes an eigenvector of $\Lambda$
in the symmetry basis studied above.
Such a basis is still compatible with the basis \eqref{block4x4-simplest} where the $4\times 4$ block is diagonal but the latter is defined only up to $SO(2)$ reparametrization transformations.
In one of these cases, the $D_4$ symmetry can be immediately spotted by checking whether the $3\times 3$ block has $x_2$ as an eigenvector.
For the case of $(\Z_2 \times \Z_2)\rtimes \Z_2^{CP}$, one has to check if the subspace $(x_1,x_3)$ contains an eigenvector.
We will now show how to formulate these criteria in a basis-independent way.

This can be done using the $f$-product of eigenvectors from the $(x_4, x_5, x_6, x_7)$ subspace.
Indeed, we work in the basis where $\Lambda$ is diagonal in this subspace, see Eq.~\eqref{block4x4-simplest},
so that an eigenvector $e^{(\alpha=4,5,6,7)}$ is aligned with $x_{\alpha=4,5,6,7}$.
The two subspaces with degenerate eigenvalues are $(x_4,x_6)$ and $(x_5,x_7)$.
Let us define generic eigenvectors within these two subspaces as
\be
q^{(\gamma)} = e^{(4)} \cos\gamma + e^{(6)} \sin\gamma\,,\quad
p^{(\delta)} = e^{(5)} \cos\delta + e^{(7)} \sin\delta\,.\label{pq-vectors}
\ee
First, taking two mutually orthogonal eigenvectors corresponding to the same eigenvalue 
unambiguously defines the $x_2$ direction:
\be
f_{ijk} q^{(\gamma)}_j q^{(\gamma+\pi/2)}_k = f_{ijk} p^{(\delta)}_j p^{(\delta+\pi/2)}_k \quad  \mbox{lies along $x_2$.}
\ee
Next, the $f$-product of generic $p$ and $q$ lies in the $(x_1,x_3,x_8)$ subspace:
\be
f_{ijk} q^{(\gamma)}_j p^{(\delta)}_k = 
{1\over 2}\left(\sin(\gamma+\delta),\, 0,\, \cos(\gamma+\delta),\, \dots,\, \sqrt{3}\cos(\gamma-\delta)\right)\,.
\ee
The $f$-product of the other pair of vectors
\be
f_{ijk} q^{(\gamma+\pi/2)}_j p^{(\delta+\pi/2)}_k = 
{1\over 2}\left(-\sin(\gamma+\delta),\, 0,\, -\cos(\gamma+\delta),\, \dots,\, \sqrt{3}\cos(\gamma-\delta)\right)\,.
\ee
Therefore, their sum and difference split the $(x_1,x_3)$ subspace from the $x_8$ direction:
\be
f_{ijk} q^{(\gamma)}_j p^{(\delta)}_k - f_{ijk} q^{(\gamma+\pi/2)}_j p^{(\delta+\pi/2)}_k = 
\left(\sin(\gamma+\delta),\, 0,\, \cos(\gamma+\delta),\, \dots,\, 0\right)\,.
\ee
By simultaneously varying $\gamma$ and $\delta$, one can scan all directions in this subspace and check
if any of them is an eigenvector of $\Lambda$.

Thus, a general algorithm to detect any extra symmetry beyond CP4 is the following.
Using pairs of eigenvectors from the $(x_4, x_5, x_6, x_7)$ subspace, construct the $x_2$ direction
and the $(x_1,x_3)$ subspace. If any of them contains an eigenvector of $\Lambda$,
we have an additional CP2 symmetry, and the model is not the pure CP4 3HDM.
If none of them contains an eigenvector of $\Lambda$, we have the pure CP4 3HDM.

\section{\texorpdfstring{$\boldsymbol{S_3}$}{S3}-symmetric 3HDM}\label{appendix-S3}

The $S_3$-symmetric 3HDM was first proposed back in 1978 \cite{Derman:1978nz,Pakvasa:1977in}
and has been studied in numerous papers since then \cite{Ivanov:2017dad}.
Several conventions exist to write the potential of this model. Here we stick to the notation of \cite{Ivanov:2012ry,Ivanov:2012fp,Ivanov:2014doa}
where the $\Z_3$ subgroup is diagonal. The symmetry group $S_3$ is generated by $a_3$ and $b$ with the form
\be
\label{S3:diag}
a_3 = \mmmatrix{\omega^2}{\cdot}{\cdot}{\cdot}{\omega}{\cdot}{\cdot}{\cdot}{1}\,,\qquad
b = \mmmatrix{\cdot}{1}{\cdot}{1}{\cdot}{\cdot}{\cdot}{\cdot}{1}\,,
\ee
where $\omega:=\mathrm{e}^{2\pi i /3}$. Here and for all matrices below dots indicate zero entries.
The Higgs potential is written as $V = V_0 + V_{S_3}$, with the same $V_0$ as in \eqref{V0} and
\be
V_{S_3} = \lambda_5 (\phi_1^\dagger\phi_3)(\phi_2^\dagger\phi_3)
+ \lambda_{10} \left[(\phi_2^\dagger\phi_1)(\phi_3^\dagger\phi_1) + (\phi_1^\dagger\phi_2)(\phi_3^\dagger\phi_2)\right] + h.c.
\label{VS3}
\ee
The coefficient $\lambda_5$ can be always made real, but then $\lambda_{10}$ remains, in general, complex.
Notice that the $\lambda_5$ term is the same as in the CP4 3HDM.

In the adjoint space, one has the same form of $M$ and $L$ as before,
while the matrix $\Lambda$ now takes now the form
\be
\Lambda_{ij} =
\left(
\begin{array}{ccc|cccc|c}
\lambda_4' & \cdot & \cdot & \Re\lambda_{10} & \Im\lambda_{10} & \Re\lambda_{10} & \Im\lambda_{10} & \cdot\\
\cdot & \lambda_4' & \cdot & \Im\lambda_{10} & -\Re\lambda_{10} & -\Im\lambda_{10} & \Re\lambda_{10} & \cdot\\
\cdot & \cdot & 2\lambda_1 - \lambda_3' & \cdot & \cdot & \cdot & \cdot & \cdot\\ \hline
\Re\lambda_{10} & \Im\lambda_{10} & \cdot & \lambda_4 & \cdot & \lambda_5 & \cdot & \cdot\\
\Im\lambda_{10} & -\Re\lambda_{10} & \cdot & \cdot & \lambda_4 & \cdot & -\lambda_5 & \cdot\\
\Re\lambda_{10} & -\Im\lambda_{10} & \cdot & \lambda_5 & \cdot & \lambda_4 & \cdot & \cdot \\
\Im\lambda_{10} & \Re\lambda_{10} & \cdot & \cdot & -\lambda_5 & \cdot & \lambda_4 & \cdot \\ \hline
\cdot & \cdot & \cdot & \cdot & \cdot & \cdot & \cdot & \Lambda_{88}
\end{array}\right)\,,
\label{Lamij-S3}
\ee
with $\Lambda_{88} = 2\lambda_1/3 + 4(\lambda_2 - \lambda_3 + \lambda_3')/3$.
It is straightforward to verify that the conditions \eqref{space123} and \eqref{space4567} are all satisfied, implying that $K$ is aligned with $x_8$.

\section{\texorpdfstring{$\boldsymbol{O(2)}$}{O(2)} symmetry and subgroups}
\label{ap:o(2)}

We describe here the $O(2)$ group and its various subgroups which include $S_3$ and $D_4$.
Different from appendix \ref{appendix-S3}, $S_3$ here will be given in the real basis.
The form of the vectors $M, L$ is easy to recover: $SO(2)$ invariance leads to
\eq{
M\sim L\sim (0,*,0;0,0,0,0;*)\,.
}
$O(2)$ invariance eliminates the second component and the same applies to the subgroups $S_3$ or $D_4$.

The $SO(2)$ symmetry can be generated by a transformation $\phi\mapsto\exp(it_2\theta)\phi$, with the standard Gell-Mann matrix $t_2$, recall Eq.~\eqref{eq:SU3}.
Invariance under this group constrains the quartic couplings to
\eq{
\label{L:so2}
\Lambda=
\left(
\begin{array}{ccc|cccc|c}
\Lambda_{11} & \cdot & \cdot & \cdot & \cdot & \cdot & \cdot & \cdot \\
 \cdot & \Lambda_{22} & \cdot & \cdot & \cdot & \cdot & \cdot & \Lambda_{28} \\
 \cdot & \cdot & \Lambda_{11} & \cdot & \cdot & \cdot & \cdot & \cdot \\
 \hline
 \cdot & \cdot & \cdot & \Lambda_{44} & \Lambda_{45} & \cdot & -\Lambda_{56} & \cdot \\
 \cdot & \cdot & \cdot & \Lambda_{45} & \Lambda_{55} & \Lambda_{56} & \cdot & \cdot \\
 \cdot & \cdot & \cdot & \cdot & \Lambda_{56} & \Lambda_{44} & \Lambda_{45} & \cdot \\
 \cdot & \cdot & \cdot & -\Lambda_{56} & \cdot & \Lambda_{45} & \Lambda_{55} & \cdot \\
 \hline
 \cdot & \Lambda_{28} & \cdot & \cdot & \cdot & \cdot & \cdot & \Lambda_{88} \\
\end{array}
\right)\,.
}
Further imposition of%
\eq{
\label{tb}
\tb=\diag(1,-1,1)\,,
}
enlarges $SO(2)$ to $O(2)$ and implies $\Lambda_{28}=\Lambda_{56}=0$.
The combination $SO(2)\times\Z_2^{\rm CP}$ is even stronger and additionally implies $\Lambda_{45}=0$ and we get a
diagonal structure for $\Lambda$ with degenerate eigenvalues with multiplicities $(1,2,2,2,1)$.
We have complete alignment from $O(2)$ symmetry irrespective of additional CP.

Among the $\Z_n$ symmetries in the 3HDMs, only the cases $n=2,3,4$ are realizable\,\cite{Ivanov:2011ae}.
Embedded in the $SO(2)$ above, we can use the following matrices as generators of $\Z_3$ and $\Z_4$:
\eq{
\ta_3=\left(
\begin{array}{ccc}
 -\frac{1}{2} & -\frac{\sqrt{3}}{2} & \cdot \\
 \frac{\sqrt{3}}{2} & -\frac{1}{2} & \cdot \\
 \cdot & \cdot & 1 \\
\end{array}
\right)\,,\quad
\ta_4=\mtrx{\cdot&1&\cdot\cr-1&\cdot&\cdot\cr\cdot&\cdot&1}
\,.
}
We discard $\Z_2$ because it will not lead to a nonabelian group when combined with $\tb$.
Note that $\ta_3$ is the same as $a_3$ in \eqref{S3:diag} after a change of basis $\tilde{g}=U_r gU_r^\dag$, with
\eq{
U_r=
\left(
\begin{array}{ccc}
 \frac{1}{\sqrt{2}} & \frac{1}{\sqrt{2}} & \cdot \\
 \frac{i}{\sqrt{2}} & -\frac{i}{\sqrt{2}} & \cdot \\
 \cdot & \cdot & 1 \\
\end{array}
\right)
\,.
}
The same applies to $\tb$ in \eqref{tb} and $b$ in \eqref{S3:diag}.
Combination of $\ta_3$ or $\ta_4$ with $\tb$ leads respectively to $S_3=D_3$ or $D_4$.
Invariance under these two groups leads to a $\Lambda$ structurally different from the $O(2)$ invariant one.
Invariance under $S_3=\langle\ta_3,\tb\rangle$ allows terms off the central blocks:
\eq{
\label{Lam-S3-real}
\Lambda=
\left(
\begin{array}{ccc|cccc|c}
 \Lambda_{11} & \cdot & \cdot & \cdot & \cdot & \Lambda_{16} & \Lambda_{17} & \cdot \\
 \cdot & \Lambda_{22} & \cdot & \cdot & \cdot & \cdot & \cdot & \cdot \\
 \cdot & \cdot & \Lambda_{11} & -\Lambda_{16} & -\Lambda_{17} & \cdot & \cdot & \cdot \\
 \hline
 \cdot & \cdot & -\Lambda_{16} & \Lambda_{44} & \Lambda_{45} & \cdot & \cdot & \cdot \\
 \cdot & \cdot & -\Lambda_{17} & \Lambda_{45} & \Lambda_{55} & \cdot & \cdot & \cdot \\
 \Lambda_{16} & \cdot & \cdot & \cdot & \cdot & \Lambda_{44} & \Lambda_{45} & \cdot \\
 \Lambda_{17} & \cdot & \cdot & \cdot & \cdot & \Lambda_{45} & \Lambda_{55} & \cdot \\
 \hline
 \cdot & \cdot & \cdot & \cdot & \cdot & \cdot & \cdot & \Lambda_{88} \\
\end{array}
\right)\,.
}
In contrast, invariance under $D_4=\langle\ta_4,\tb\rangle$ allows the $3\times 3$ block to have nondegenerate eigenvalues:
\eq{
\Lambda=
\left(
\begin{array}{ccc|cccc|c}
 \Lambda_{11} & \cdot & \cdot & \cdot & \cdot & \cdot & \cdot & \cdot \\
 \cdot & \Lambda_{22} & \cdot & \cdot & \cdot & \cdot & \cdot & \cdot \\
 \cdot & \cdot & \Lambda_{33} & \cdot & \cdot & \cdot & \cdot & \cdot \\
 \hline
 \cdot & \cdot & \cdot & \Lambda_{44} & \cdot & \cdot & \cdot & \cdot \\
 \cdot & \cdot & \cdot & \cdot & \Lambda_{55} & \cdot & \cdot & \cdot \\
 \cdot & \cdot & \cdot & \cdot & \cdot & \Lambda_{44} & \cdot & \cdot \\
 \cdot & \cdot & \cdot & \cdot & \cdot & \cdot & \Lambda_{55} & \cdot \\
 \hline
 \cdot & \cdot & \cdot & \cdot & \cdot & \cdot & \cdot & \Lambda_{88} \\
\end{array}
\right)\,.\label{Lam-D4-real}
}
This is clearly a particular form of the CP4 invariant block structure in \eqref{Lamij-CP4-block} with $4\times 4$ block in the form \eqref{block4x4-simplest-text}.
Canonical $CP$ is clearly a symmetry and CP4 in \eqref{X-CP4} is easily identified as $\ta_4\cdot CP$.


\end{document}